\documentclass{IEEEtran}

\usepackage{cite}
\usepackage{caption}
\usepackage{indentfirst}
		
\usepackage{amsmath}
\interdisplaylinepenalty=2500

\usepackage{amssymb}
\usepackage{graphicx}
\usepackage{epstopdf}
\graphicspath{{fig/}}
\usepackage{subfigure}
\usepackage{float}
\usepackage{color}
\usepackage{array}
\usepackage{longtable}
\usepackage{rotating}
\usepackage{multirow}
\usepackage{amsfonts,amssymb}
\usepackage{mathrsfs}
\usepackage{mdwlist}
\usepackage{threeparttable}
\usepackage{makecell}
\usepackage{diagbox}

\usepackage{amsthm}
\usepackage[linesnumbered,boxed,ruled,commentsnumbered]{algorithm2e}

\usepackage{mdwlist}
\newtheorem{myTheo}{Theorem}
\newtheorem{myLemma}{Lemma}

\newtheorem{myCor}{Corollary}

\title{Speaker Verification By Partial AUC Optimization With Mahalanobis Distance Metric Learning}
\author{{Zhongxin Bai, Xiao-Lei Zhang, and Jingdong Chen}
\thanks{Z. Bai and X.-L. Zhang are with the Center of Intelligent Acoustics and Immersive Communications (CIAIC) and the School of Marine Science and Technology, Northwestern Polytechnical University, 127 Youyi West Road, Xi'an, Shaanxi 710072, China
(e-mail: zxbai@mail.nwpu.edu.cn, xiaolei.zhang@nwpu.edu.cn). }
\thanks{J. Chen is with CIAIC, Northwestern Polytechnical University, 127 Youyi West Road, Xi'an, Shaanxi 710072, China
(e-mail: jingdongchen@ieee.org).}
\thanks{This work was supported in part by the National Key Research and Development Program of China
        under Grant No. 2018AAA0102200 and in part by National Science Foundation
        of China (NSFC) and Israel Science Foundation (ISF) joint research
        program under Grant No. 61761146001 and by the NSFC Key Program under Grant No. 61831019.}}

\begin{document}
\maketitle

\begin{abstract}

Receiver operating characteristic (ROC) and detection error tradeoff (DET) curves are two widely used evaluation metrics for speaker verification. They are equivalent since the latter can be obtained by transforming the former's true positive y-axis to false negative y-axis and then re-scaling both axes by a probit operator. Real-world speaker verification systems, however, usually work on part of the ROC curve instead of the entire ROC curve given an application. Therefore, we propose in this paper to use the area under part of the ROC curve (pAUC) as a more efficient evaluation metric for speaker verification. A Mahalanobis distance metric learning based back-end is applied to optimize pAUC, where the Mahalanobis distance metric learning guarantees that the optimization objective of the back-end is a convex one so that the global optimum solution is achievable.
 To improve the performance of the state-of-the-art speaker verification systems by the proposed back-end, we further propose two feature preprocessing techniques based on  length-normalization  and  probabilistic linear discriminant analysis respectively. We evaluate the proposed systems on the major languages of NIST SRE16 and the core tasks of SITW. Experimental results show that the proposed back-end outperforms the state-of-the-art speaker verification back-ends  in terms of seven evaluation metrics.
\end{abstract}

\begin{IEEEkeywords}

pAUC, metric learning, squared Mahalanobis distance, speaker verification.

\end{IEEEkeywords}

\section{Introduction}
\IEEEPARstart{S}{peaker} verification aims to verify whether an utterance is pronounced by a hypothesized speaker based on some utterances pre-recorded from that speaker. Depending on whether it requires the to-be-verified speaker to pronounce some pre-defined text or not, speaker verification can be classified into two classes, i.e., \textit{text-dependent} and \textit{text-independent}. This paper focuses on the text-independent case. There are generally two approaches to this problem: a two-step one, which consists of a front-end feature extractor and a back-end classifier, and a one-step approach, which trains an end-to-end system \cite{heigold2016end,zhang2016end,snyder2016deep,jung2018complete}. This paper focuses on the two-step approach.

In a two-step approach, it is important to have a good front-end. In the literature, the Gaussian mixture model (GMM) based universal background model (UBM) \cite{reynolds2000speaker} plus identity vector (i-vector) \cite{dehak2011front} is commonly used. In such a front-end, a GMM-UBM is first trained to collect Baum-Welch statistics, which is formed as a supervector for each utterance. Then, factor analysis is used to reduce the dimensionality of the supervectors to low-dimensional i-vectors. Many extensions of the GMM-UBM/ivector front-end were proposed recently, e.g., \cite{cumani2018speaker}. Motivated by the paradigm shift of speech recognition from GMM-based acoustic modeling to deep neural network (DNN) based one, a DNN-UBM/i-vector front-end was developed \cite{lei2014novel,kenny2014deep,richardson2015deep}. It essentially uses the DNN-based acoustic model trained for speech recognition to generate the posterior probabilities instead of GMM-UBM. Tan \textit{et al.} further employed a denoising autoencoder to replace the DNN-based acoustic model for dealing with environmental noise \cite{tan2018denoised}. These method, however, needs transcriptions of the training data to train the acoustic models, which may not be always available.

An emerging direction of the front-end research is deep embedding. Deep embedding uses a DNN to distinguish the training speakers in a closed set by a classification-based loss function, and takes the outputs of the hidden layers of the DNN for verification.
An early deep embedding front-end is the so-called d-vector \cite{variani2014deep,li2017deep}, in which frame-level speaker features are extracted from the top
hidden layer, and then utterance-level speaker features are derived as the average of the frame-level features.
However, the average of the frame level features does not consider the dependency of the contextual frames. Several efforts have been made to address this problem \cite{snyder2017deep,snyder2018xvector,okabe2018attentive,zhu2018self}.
For example, in \cite{snyder2017deep,snyder2018xvector}, Snyder \textit{et al.} proposed to insert an average pooling layer into DNN to handle variable-length segments. In \cite{gao2018improved}, Gao \textit{et al.} exploited a cross-convolutional-layer pooling method to extract the first-order statistics of the input segments. Attention mechanism was also studied to generate utterance-level features \cite{okabe2018attentive,zhu2018self}. Another problem with the deep embedding front-end is on the training loss function. Because the classification-based loss is only a surrogate loss function of the final evaluation metrics of speaker verification, finding more effective loss functions become an important issue. In \cite{yadav2018learning,li2018deep}, the authors proposed to minimize the classification-based loss and center loss together. In \cite{zhang2018text},  Zhang \textit{et al.} took triplet loss  as the training objective of a deep embedding network. Although employing the above training loss functions is shown to be able to improve the performance, the extracted speaker features still have significant intra-class variations, which need to be handled by back-ends.

Regarding the back-end, commonly used back-end classifiers include cosine similarity scoring \cite{dehak2011front}, support vector machine \cite{cumani2014large}, and probabilistic linear discriminant analysis (PLDA) \cite{ioffe2006probabilistic,kenny2010bayesian,garcia2011analysis}. DNNs have also been investigated \cite{ghahabi2014deep,ghahabi2017deep}.
Inter-session variability compensation is a main task of back-ends, since the front-ends are inter-session- and speaker-dependent. Linear compensation techniques such as linear discriminant analysis (LDA) and within class covariance normalization \cite{hatch2006within} are often used. Recently, nonlinear compensation methods have been studied as well: Cumani \textit{et al.}\cite{cumani2017nonlinear,cumani2017joint} proposed a nonlinear transformation to i-vectors to make them more suitable for PLDA \cite{cumani2018scoring}; Zheng \textit{et al.} developed a DNN-based dimensionality reduction method as an alternative to LDA \cite{tieran2018deep}.
However, because the aforementioned back-ends do not optimize the evaluation metrics directly, such as equal error rate (EER), their performance  may be suboptimal.

To optimize the evaluation metrics directly, metric learning needs to be used, which attempts to learn an appropriate similarity measurement space of data points. It has been widely studied in the machine learning community. One of the most popular metric learning methods is to optimize the parameters of a Mahalanobis distance in a linear space \cite{kulis2013metric}. Recently, deep metric learning \cite{oh2016deep,hoffer2015deep,cakir2019deep}, which uses a DNN to learn a nonlinear similarity measurement, has also received much attention. Metric learning has been recently studied in speaker verification as well. For example, some metric learning based back-ends \cite{bai2018cosine,novoselov2018triplet} have been proposed to compensate the inter-session variability of the embedding features, where the work in \cite{bai2018cosine} minimizes the EER of speaker verification directly. It is also popular to train an end-to-end speaker verification system \cite{heigold2016end,zhang2016end,snyder2016deep,jung2018complete} or an embedding DNN  \cite{zhang2018text} by deep metric learning. In our recent work \cite{bai2018cosine}, we proposed a linear cosine metric learning algorithm to minimize the overlap region of decision scores. Similarly, in \cite{novoselov2018triplet}, Novoselov \textit{et al.} proposed a triplet-loss-based cosine similarity metric learning back-end.

Although directly optimizing an evaluation metric of speaker verification improves the performance,  current methods focus mainly on optimizing EER. Since it needs to work at a different point of its receiver operating characteristic (ROC) curve for different applications, a speaker verification system tuned to yield the minimum EER in one scenario may not produce the best performance in another scenario.
To address this issue, this paper proposes a back-end to directly optimize part of the area under the ROC curve (named \textit{partial AUC}, or pAUC for short). The main contributions of this paper are summarized as follows:
\begin{itemize}

\item A new calibration-insensitive evaluation metric named ``pAUC'' is proposed for speaker verification. pAUC represents partial area under the ROC curve. It meets the evaluation requirement of real-world applications that work on different parts of ROC curves, such as bank security systems or terrorist detection systems. It is a supplement evaluation metric to the existing metrics. As shown in Fig. \ref{fig:pAUC_schematic}, the pAUC for a specific application is defined by two false positive rate (FPR) parameters: $\alpha$ and $\beta$.

 \item A Mahalanobis metric learning back-end is proposed to maximize pAUC (pAUCMetric). pAUCMetric evaluates the similarity between two speaker features by a squared  Mahalanobis distance, and optimizes the parameters of the distance metric to maximize pAUC where the working points of the speaker verification system locate. pAUCMetric is formulated as a convex optimization problem, where the global optimum solution is guaranteed.  We further combine pAUCMetric with two feature preprocessing techniques: 1) length-normalization, and 2) latent variables of PLDA, which combine the ranking property of pAUC into the Cosine similarity or PLDA back-ends for further performance improvement. It is shown that the AUC optimization, such as the one in \cite{garcia2012optimization,mingote2019optimization},  can be viewed as a special case of pAUC with $\alpha=0$ and $\beta=1$.

\end{itemize}

Experiments are conducted to evaluate the effectiveness of pAUCMetric and compare pAUCMetric with PLDA and cosine similarity scoring back-ends that do not optimize evaluation metrics directly. For each experiment, all back-ends use the same front-end, which is either the GMM/i-vector or the x-vector. We train the comparison methods on switchboard, NIST SRE04--SRE10 and VoxCeleb datasets, and evaluate them on the major languages of NIST SRE16 and the core tasks of SITW. The evaluation is conducted under the conditions of both noise-matching and -mismatching, as well as both language-matching and -mismatching.
The experimental results show that pAUCMetric outperforms PLDA by relatively $10\%$, $9\%$ and $20\%$ in terms of EER, pAUC and AUC metrics respectively.

The rest of this paper is organized as follows:  Section \ref{sec:mot_rel} presents the motivations. Section \ref{sec:proposed_algorithm} and \ref{sec:feature} describe the proposed algorithm.  Section \ref{sce:experiment} presents the experiment results. Finally, important conclusions are drawn in section \ref{sec:summary}.

\section{Motivation} \label{sec:mot_rel}

\begin{figure}[t!]
  \centering
  \includegraphics[width=2.2 in]{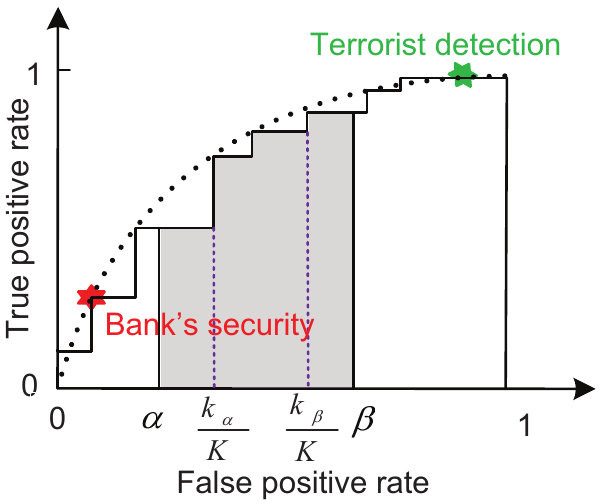}
  \caption{Illustrations of the ROC curve, AUC, and pAUC. }
  \label{fig:pAUC_schematic}
\end{figure}

\subsection{Motivation for the pAUC evaluation metric}

\begin{figure}[t!]
  \centering
  \includegraphics[width=3.3 in]{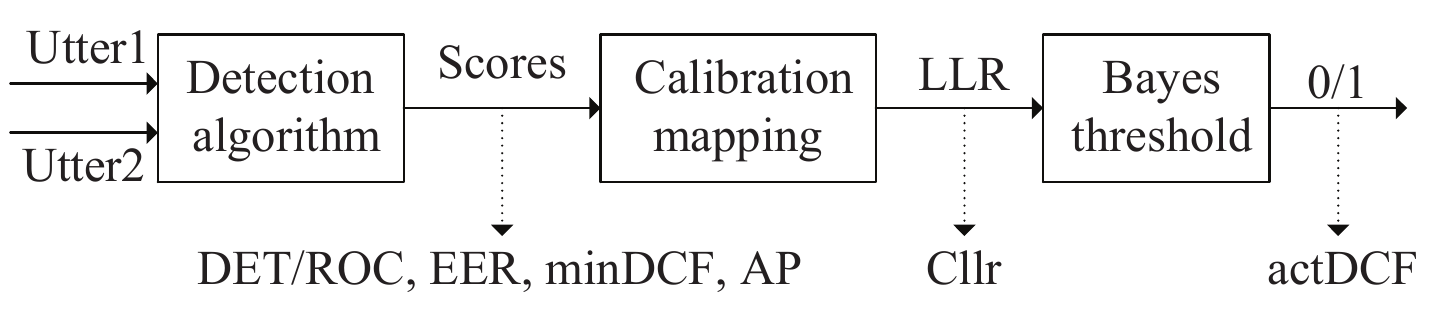}
  \caption{Diagram of a speaker verification system with common evaluation metrics. }
  \label{fig:SV_diagram}
\end{figure}

It is known that a speaker verification system first generates a similarity score of a trial by a speaker detection algorithm, and then makes a hard decision according to a threshold as illustrated in Fig.~\ref{fig:SV_diagram}. The speaker detection algorithm assigns higher scores to \textit{target trials} than \textit{non-target trials}, which determines the \textit{discriminability} of the system.
The decision threshold is usually determined by first \textit{calibrating} the similarity scores to the log-likelihood ratios (LLR) and then applying the Bayes decision theory \cite{brummer2013bosaris} using application-dependent priors, i.e., the prior of targets  and the costs of false negative rate (also known as miss detection rate) and false positive rate (also known as false alarm rate).

The evaluation metrics of speaker verification in Fig.~\ref{fig:SV_diagram} can be categorized to two classes---\textit{calibration-sensitive} metrics and \textit{calibration-insensitive} ones. The calibration-sensitive metrics, which include the actual detection cost function (actDCF) and cost of LLR ($\rm C_{llr}$), aim to evaluate a calibrated speaker verification system under the framework of Bayes decision theory. Specifically, the application-dependent actDCF evaluates the empirical Bayes risk of a system at the  Bayes decision threshold \cite{brummer2013bosaris}, which determines how good is the hard decision. $\rm C_{llr}$ evaluates the discrimination of the calibrated LLR in an application-independent manner \cite{brummer2006application}. While calibration-sensitive evaluation metrics have many pros in evaluating the suitability of a calibrated system, we often need to evaluate the detection algorithm 
of an uncalibrated system directly.

In contrast, calibration-insensitive metrics evaluate the discriminability of  the detection algorithm. They include the detection error tradeoff (DET) curve, EER, minimum detection cost function (minDCF), and average precision. DET curve is an alternative form of the ROC curve. As a matter of fact, the DET curve can be obtained by transforming the ROC curve's true positive y-axis to false negative y-axis and then re-scaling both axes by a non-linear warping named the probit operator\cite{brummer2013bosaris, martin1997det}. It reflects the global discriminability of a speaker verification system.
EER and minDCF are two points on the DET curve, which reflect the discriminability of the system to some extent. Like the DET curve, average precision is a global metric that combines recall and precision for ranked retrieval results, which is however  sensitive to class-imbalanced problems such as speaker verification. To summarize, DET curve and average precision are two global metrics, while EER and minDCF are two local points on the DET curve.

In practice, a speaker verification system usually works on a local fraction of the DET curve with a tunable threshold, instead of a single local point. For example, a bank security system is tuned in a range where the false positive rate is controlled below $0.01\%$. In contrast, a terrorist detection system of a public security
department is tuned in a range whose recall rate is required in a range of higher than 99\%. As shown in Fig. \ref{fig:pAUC_schematic}, pAUC may meet such a requirement. First, $[\alpha,\beta]$ in Fig. \ref{fig:pAUC_schematic} defines the interested operating points of a real-world working scenario. Second, pAUC, which is a scalar in the range of $[0,1]$, describes the interested part of the ROC curve efficiently. At last, its calculation method, which will be presented in \eqref{eq:EmpAUC}, does not depend on a decision threshold. Hence, we adopt pAUC as a new calibration-insensitive evaluation metric.

\subsection{Motivation for the pAUCMetric back-end}

How to optimize calibration-sensitive evaluation metrics has been well studied and a number of methods were developed \cite{brummer2010measuring,brummer2013likelihood}. But those methods do not improve the discriminability of the detection algorithm as the order of the similarity scores of training trials is not changed. In order to improve the discriminability of the detection algorithm, it is better to optimize the ROC curve directly by maximizing its AUC. However, optimizing the entire AUC is not only costly but also unnecessary as that most practical systems work only on part of their ROC curves. Therefore, we propose a metric learning back-end based on Mahalanobis distance to optimize pAUC accordingly.

Another advantage of pAUCMetric is that it can select difficult negative training trials by setting $\beta$ to a small value, which is a well-known challenging problem for the algorithms that need to group training utterances into training trials.
As will be shown in the experiments, the proposed pAUCMetric performs better than a triplet-loss-based algorithm, which differs from pAUCMetric only in the loss function, for all the aforementioned evaluation metrics.

\section{pAUC metric learning back-end} \label{sec:proposed_algorithm}
 In this section, we first provide an overview to the speaker verification system in Section \ref{sec:euc}, and then present the objective function  and optimization algorithm of the proposed back-end in Sections \ref{sec:obj} and \ref{subsec:opt_alg} respectively.

\subsection{System overview}\label{sec:euc}

\begin{figure}[t!]
  \centering
  \includegraphics[width=2.8 in]{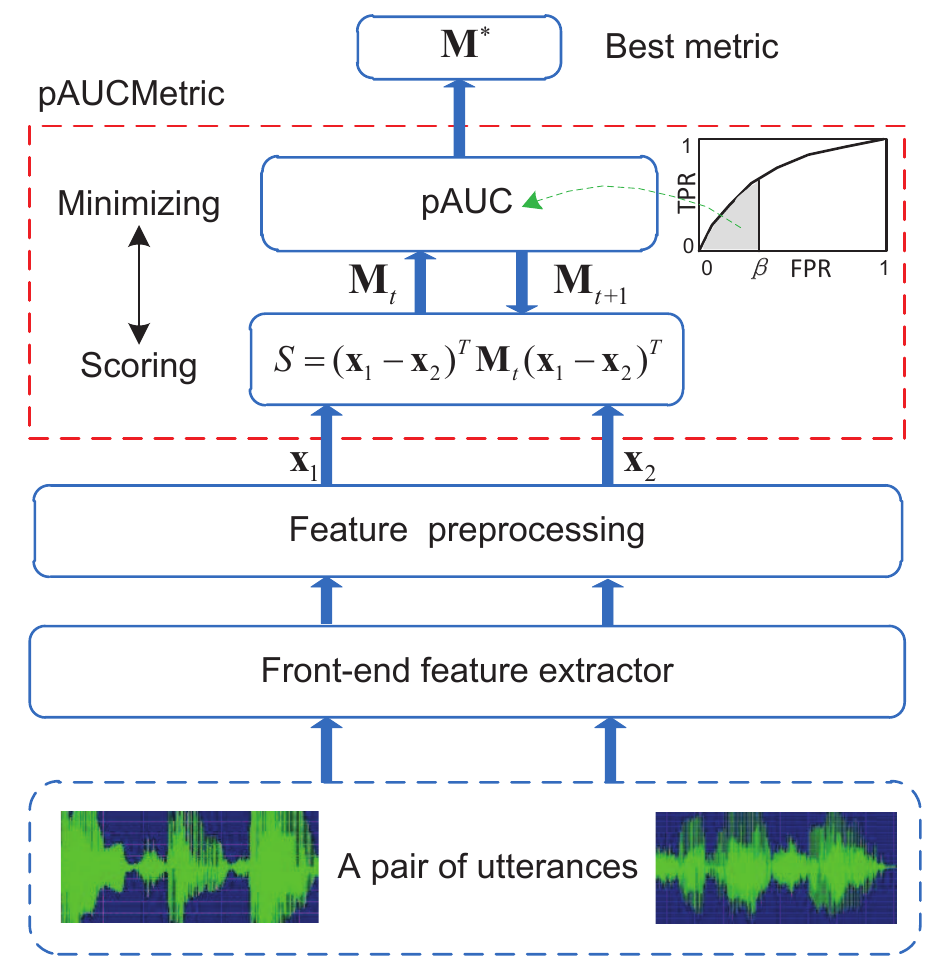}
  \caption{Diagram of the  pAUCMetric based speaker verification system. }
  \label{fig:back_end_framework}
\end{figure}
The diagram of the pAUCMetric based speaker verification system is shown in Fig. \ref{fig:back_end_framework}. The front-end is used to extract speaker features from speech signals. We use i-vector\cite{dehak2011front} or x-vector\cite{snyder2018xvector} as the front-end. After feature extraction by the front-end, we further preprocess the features as described in Section \ref{sec:feature}, and then use the preprocessed feature as the input of pAUCMetric.

The role of pAUCMetric is to judge whether two preprocessed features  $\mathbf{x}_{q_1}$ and $\mathbf{x}_{q_2}$ belong to the same speaker based on their similarity. The similarity is measured by the following squared Mahalanobis distance:
\begin{align}
\label{eq:Mahalanobis}
    S(\mathbf{x}_{q_1},\mathbf{x}_{q_2};\mathbf{M}) =(\mathbf{x}_{q_1}-\mathbf{x}_{q_2})^T\mathbf{M}(\mathbf{x}_{q_1}-\mathbf{x}_{q_2})
\end{align}
where $\mathbf{M}$ is a symmetric positive semi-definite matrix, which is to be learned by pAUCMetric. If the squared Mahalanobis distance between  $\mathbf{x}_{q_1}$ and $\mathbf{x}_{q_2}$  is smaller than a pre-specified threshold $\theta^*$, $\mathbf{x}_{q_1}$ and $\mathbf{x}_{q_2}$ are regarded as from the same speaker; otherwise, they are regarded as from different speakers. We denote $\mathbf{z}_i=\mathbf{x}_{q_1}-\mathbf{x}_{q_2}$, and denote $ S(\mathbf{x}_{q_1},\mathbf{x}_{q_2};\mathbf{M})$ as $S(\mathbf{z}_i;\mathbf{M})$ for simplicity. A probabilistic explanation of  the Mahalanobis distance is given in Appendix A.

\subsection{Objective function}
\label{sec:obj}

Given a training set with $N$ speakers and $Q$ embedding vectors  $\mathcal{X}=\{(\mathbf{x}_q,y_q)\}_{q=1}^Q$, where $y_q = 1,\ldots,N$ is the identity of $\mathbf{x}_q$,
we first construct a pairwise training set
 \begin{equation}\label{eq:T}
  \mathcal{T}=\{(\mathbf{z}_i,l_i)\}_{i=1}^I
\end{equation}
where $\mathbf{z}_i=\mathbf{x}_{q_1}-\mathbf{x}_{q_2}$ with $q_1=1,\ldots,Q$ and $q_2=1,\ldots,Q$ ($q_1\neq q_2$), $I$ is the size of $\mathcal{T}$, and $l_i$ is the ground-truth label of $\mathbf{z}_i$ satisfying:
  \begin{equation}\label{eq:li}
  l_i=\left\{\begin{array}{ll}
    1,&\quad \mbox{if } y_{q_1}=y_{q_2}\\
    -1,& \quad \mbox{otherwise}
  \end{array}\right.
\end{equation}
 We define the subset of the true trials of $\mathcal{T}$ as:
\begin{equation}\label{eq:true}
  \mathcal{P}=\{(\mathbf{z}_j^+,l_j=1) \}_{j=1}^J
\end{equation}
and the subset of the imposter trials of $\mathcal{T}$ as:
\begin{equation}\label{eq:imposter}
 \mathcal{N}=\{ (\mathbf{z}_k^-,l_k=-1)\}_{k=1}^K
\end{equation}
where $J$ and $K$ are the sizes of $\mathcal{P}$ and $\mathcal{N}$ respectively.


After the above preliminary setting, the pAUC is calculated as follows. We define a subset of $\mathcal{N}$ that defines the pAUC over the FPR range $[\alpha, \beta]$:
\begin{equation}\label{eq:N0}
 \mathcal{N}_0=\{ (\mathbf{z}_r^-,l_r=-1)\}_{r=1}^R
\end{equation}
where $R\leq K$, and $\mathcal{N}_0$ is determined as following.
 Because the imposter set $\mathcal{N}$ contains only a limited number of trials, we first replace $[\alpha$,$\beta]$ by $\left[k_\alpha/K,k_\beta/K \right]$ where $k_\alpha=\left \lceil K \alpha \right \rceil$  and $k_\beta=\left \lfloor K \beta \right \rfloor $ are two integers. Then,  $\{S(\mathbf{z}_k^-;\mathbf{M})\}_{\mathbf{z}_k^-\in \mathcal{N}}$ are sorted in ascending order. Finally,  we pick the trials ranked from the top $k_\alpha$th to $k_\beta$th positions to form $\mathcal{N}_0$. The calculation of pAUC is equivalent to that of the normalized AUC over $\mathcal{P}$ and $\mathcal{N}_{0}$, which is computed as:
   \begin{align}\label{eq:EmpAUC}
     {\rm pAUC} =& 1-\frac{1}{JR} \sum_{j=1}^{J} \sum_{r=1}^{R}
     \left[ \mathbb{I}(S(\mathbf{z}_j^+;\mathbf{M})> S(\mathbf{z}_r^-;\mathbf{M})) \right .   \nonumber \\
      &+ \frac{1}{2}\mathbb{I}(S(\mathbf{z}_j^+;\mathbf{M})= S(\mathbf{z}_r^-;\mathbf{M})) \left . \right]
   \end{align}
where  $\mathbb{I}(\cdot)$ is an indicator function that returns 1 if the statement is true, and 0 otherwise.

However, directly optimizing \eqref{eq:EmpAUC} is an NP-hard problem. To circumvent this, let us relax \eqref{eq:EmpAUC} by replacing the indicator function by a hinge loss function:
   \begin{eqnarray}\label{equal:hinge_loss}
      &\ell_{\mbox{hinge}}(S(\mathbf{z}_j^+;\mathbf{M})>S(\mathbf{z}_r^-;\mathbf{M}))=\nonumber\\
      &{\rm max} \left[0,\delta-\left(S(\mathbf{z}_r^-;\mathbf{M})-S(\mathbf{z}_j^+;\mathbf{M})\right) \right]
   \end{eqnarray}
where $\delta>0 $ is a tunable hyper-parameter controlling  the distance margin between $\{S(\mathbf{z}_r^-;\mathbf{M})\}_{\mathbf{z}_r^-\in \mathcal{N}_0}$ and $\{S(\mathbf{z}_j^+;\mathbf{M})\}_{\mathbf{z}_j^+\in \mathcal{P}}$. Substituting \eqref{equal:hinge_loss} into \eqref{eq:EmpAUC} and further changing the maximization problem \eqref{eq:EmpAUC} into an equivalent minimization one gives \eqref{eq:loss}.
  \begin{equation}
  \label{eq:loss}
     \ell=\frac{1}{JR}\sum_{j=1}^{J}\sum_{r =1}^{R} {\rm max} \Big( 0,\delta-S(\mathbf{z}_r^-;\mathbf{M})+S(\mathbf{z}_j^+;\mathbf{M})\Big)
   \end{equation}

The proposed pAUCMetric minimizes \eqref{eq:loss} over $\mathcal{P}$ and $\mathcal{N}_0$.
To prevent overfitting to the training data, we add a regularization term $\lambda\Omega(\cdot)$ to the minimization problem according to a plausible formulation in \cite{huo2018cross}, which gives the objective function of pAUCMetric:
\begin{equation}\label{eq:structure_loss}
   \mathbf{M}^*=\arg \min_{\mathbf{M}}\ell(\mathcal{P},\mathcal{N}; \mathbf{M} )+\lambda\Omega(\mathbf{M}),
\end{equation}
where $\lambda$ is a regularization hyperparameter, and $\lambda\Omega(\cdot)$ is defined as:
\begin{equation}\label{eq:regular}
   \lambda \Omega(\mathbf{M})=\frac{\gamma}{J}\sum_{j=1}^{J}S(\mathbf{z}_j^+;\mathbf{M})+\mu[{\rm tr}(\mathbf{M})-{\rm logdet}(\mathbf{M})]
\end{equation}
with $\gamma$ and $\mu$ being two tunable hyper-parameters. The first term on the right-hand side of (\ref{eq:regular}), i.e., $\frac{1}{J}\sum_{j=1}^{J}S(\mathbf{z}_j^+;\mathbf{M})$, which was first introduced in \cite{weinberger2009distance}, aims to bound $S(\mathbf{z}_j^+;\mathbf{M})$ in \eqref{eq:loss}.  The second term, i.e., ${\rm tr}(\mathbf{M})-{\rm logdet}(\mathbf{M})$, which is a specifical case of \textit{LogDet divergence} \cite{davis2007information} defined over positive semi-definite (PSD) matrices \cite{kulis2013metric}, is used to improve the generalization ability and further constrain $\mathbf{M}$ to be PSD.

\eqref{eq:structure_loss} can also be interpreted from another viewpoint using the following lemma.
\begin{myLemma}\label{cor:1}
  The maximization of pAUC in \eqref{eq:structure_loss} is a problem of enlarging a weighted margin between the positive and negative trials while minimizing the within-class variances of the two class trials simultaneously.
\end{myLemma}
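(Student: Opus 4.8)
The plan is to recast the objective in \eqref{eq:structure_loss} into an explicit margin-versus-variance form, after which the stated interpretation is immediate. Everything rests on the identity $S(\mathbf{z}_i;\mathbf{M})=\mathbf{z}_i^T\mathbf{M}\mathbf{z}_i={\rm tr}(\mathbf{M}\mathbf{z}_i\mathbf{z}_i^T)$, which renders each distance in \eqref{eq:loss} and \eqref{eq:regular} a linear functional of $\mathbf{M}$ against a rank-one matrix. First I would restrict the hinge loss \eqref{eq:loss} to its active set $\mathcal{A}=\{(j,r):\delta-S(\mathbf{z}_r^-;\mathbf{M})+S(\mathbf{z}_j^+;\mathbf{M})>0\}$, on which $\max(0,\cdot)$ acts linearly, and count for each trial its violating partners, $a_j=|\{r:(j,r)\in\mathcal{A}\}|$ and $b_r=|\{j:(j,r)\in\mathcal{A}\}|$. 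These nonnegative counts are precisely what turn a uniform pairwise loss into a \emph{weighted} one that emphasizes the hard trials.

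Next I would reduce the double sum to single sums weighted by $a_j$ and $b_r$, and fold the weights into the scatter matrices $\mathbf{C}_w^+=\frac{1}{JR}\sum_j a_j\mathbf{z}_j^+\mathbf{z}_j^{+T}$ and $\mathbf{C}_w^-=\frac{1}{JR}\sum_r b_r\mathbf{z}_r^-\mathbf{z}_r^{-T}$. By the trace identity the active hinge loss becomes $\frac{\delta|\mathcal{A}|}{JR}+{\rm tr}(\mathbf{M}\mathbf{C}_w^+)-{\rm tr}(\mathbf{M}\mathbf{C}_w^-)$, whose last two terms form a weighted mean-distance gap, i.e. the (negative) margin between the positive and negative trials, while each individual trace ${\rm tr}(\mathbf{M}\mathbf{C}_w^{\pm})$ is a second moment of the corresponding difference vectors and hence---since those vectors are centered at the origin---a within-class variance of that trial class under $\mathbf{M}$. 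I would then absorb the first regularizer of \eqref{eq:regular}, namely $\frac{\gamma}{J}\sum_j S(\mathbf{z}_j^+;\mathbf{M})=\gamma\,{\rm tr}(\mathbf{M}\mathbf{C}^+)$ with $\mathbf{C}^+=\frac{1}{J}\sum_j\mathbf{z}_j^+\mathbf{z}_j^{+T}$, into the positive within-class variance, and observe that the LogDet term ${\rm tr}(\mathbf{M})-{\rm logdet}(\mathbf{M})$ merely fixes the scale of $\mathbf{M}$ and enforces positive semi-definiteness, leaving the margin/variance reading untouched. Collecting terms shows that minimizing \eqref{eq:structure_loss} enlarges the weighted margin while shrinking the within-class variances, as claimed.

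The step I expect to be the main obstacle is reconciling the signs of the variance interpretation. Since the hinge drives the negative distances \emph{upward}, the term $-{\rm tr}(\mathbf{M}\mathbf{C}_w^-)$ appears to \emph{inflate} rather than shrink the negative variance. The resolution I would make precise is that the weights $b_r$ are supported on the \emph{hard} negatives, those with the smallest $S(\mathbf{z}_r^-;\mathbf{M})$ forming the left tail of the imposter scores that actually violate the margin; raising these smallest distances pulls the tail toward the bulk and therefore \emph{reduces} the imposter spread, and symmetrically $a_j$ concentrates on the large-$S(\mathbf{z}_j^+;\mathbf{M})$ genuine trials whose reduction trims the right tail of the positive scores. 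A secondary subtlety is that $\mathcal{A}$, and thus the weights $a_j,b_r$, depend on $\mathbf{M}$; I would therefore state the decomposition at a fixed $\mathbf{M}$ (equivalently, locally along the optimization path), which suffices for the qualitative equivalence the lemma asserts.
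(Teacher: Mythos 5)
Your proposal is correct and follows essentially the same route as the paper: your active set $\mathcal{A}$ and counts $a_j,b_r$ are exactly the paper's index matrix $\mathbf{\Pi}(j,r)$ and weights $p_j,p_r$ (up to the $1/R$, $1/J$ normalizations), and your decomposition of the hinge loss into $\frac{\delta|\mathcal{A}|}{JR}$ plus a weighted positive-trial sum minus a weighted negative-trial sum is precisely \eqref{eq:loss_probxxx}, with the within-class-variance reading likewise supplied by the $\gamma$-regularizer and the decomposition stated at fixed $\mathbf{M}$ (the paper's ``constant in a single iteration''). The trace/scatter-matrix notation is only a cosmetic difference.
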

\begin{proof}
Let us define an index matrix $\mathbf{\Pi}\in \{0, 1\}^{J \times R} $:
\begin{equation}\label{index_matrix}
   \mathbf{\Pi}(j,r)=\left \{
   \begin{array}{cl}
          1, &  \quad\mbox{if    } \delta+S(\mathbf{z}_j^+;\mathbf{M})  >   S(\mathbf{z}_r^-;\mathbf{M}) \\
          0, & \quad\mbox{otherwise} \\
    \end{array}.
    \right.
\end{equation}
and rewrite the loss function of \eqref{eq:loss} as:
\begin{align}\label{eq:loss_probxxx}
   \ell  & = \frac{\delta }{JR}\sum_{j =1}^{J}\sum_{r = 1}^{R} \mathbf{\Pi}(j,r) +\frac{1}{J}\sum_{j =1}^{J}\Big(\frac{1}{R}\sum_{r=1}^R \mathbf{\Pi}(j,r)\Big)S(\mathbf{z}_j^+;\mathbf{M})  \nonumber \\
     & \quad -\frac{1}{R}\sum_{r=1}^{R}\Big(\frac{1}{J}\sum_{j=1}^{J}\mathbf{\Pi}(j,r)\Big)S(\mathbf{z}_r^-;\mathbf{M})  \nonumber \\
     &=c+\frac{1}{J}\sum_{j=1}^{J}p_j S(\mathbf{z}_j^+;\mathbf{M}) -\frac{1}{R}\sum_{r =1}^{R}p_r S(\mathbf{z}_r^-;\mathbf{M})
\end{align}
where $c= \frac{\delta }{JR}\sum_{j =1}^{J}\sum_{r = 1}^{R} \mathbf{\Pi}(j,r)$ is a constant in a single iteration, $p_j=\frac{1}{R}\sum_{r=1}^{R}\mathbf{\Pi}(j,r)$ and  $p_r=\frac{1}{J}\sum_{j=1}^{J}\mathbf{\Pi}(j,r)$ are the weights of the positive and negative trials respectively.
It is clear that minimizing \eqref{eq:loss_probxxx} is a problem of enlarging the weighted margin between the positive and negative trials.
\end{proof}

Because the regularization term $\frac{\gamma}{J}\sum_{j=1}^{J}S(\mathbf{z}_j^+;\mathbf{M})$ minimizes the within-class variance, we see that the objective \eqref{eq:structure_loss} enlarges the between-class distance and minimizes the within-class variance simultaneously, which is also the principle behind many well-known back-ends, such as LDA, WCCN, and PLDA. The difference lies in that pAUCMetric works in the squared Mahalanobis distance space and encodes the pAUC information into the weights $p_j$ and $p_r$.

\subsection{Optimization algorithm}\label{subsec:opt_alg}

In order to solve the optimization problem in (6), substituting \eqref{index_matrix} into  \eqref{eq:structure_loss} gives
\begin{equation}\label{eq:loss_prob}
       \mathbf{M}^*=\arg\min_{\mathbf{M}}\langle \mathbf{P}+ \gamma  \mathbf{P}_{\mathcal{P}}, \mathbf{M}\rangle _F+\mu \left[{\rm tr}(\mathbf{M})-{\rm logdet}(\mathbf{M})\right],
\end{equation}
where $\langle \cdot \rangle_F$ denotes the Frobenius norm operator, and
\begin{eqnarray}\label{index_matrix2}
  &&\mathbf{P}_{\mathcal{P}}= \frac{1}{J} \sum_{j=1}^{J}\mathbf{z}_j^+\mathbf{z}_j^{+T}, \\\label{index_matrix3}
  &&\mathbf{P}= \frac{1}{JR}\sum_{j=1}^{J}  \sum_{r=1}^{R}\mathbf{\Pi}(j,r)(\mathbf{z}_j^+\mathbf{z}_j^{+T}-\mathbf{z}_r^-\mathbf{z}_r^{-T}).
\end{eqnarray}
We employ the {proximal point algorithm} (PPA)\cite{huo2018cross}  to optimize \eqref{eq:loss_prob}. The resulting algorithm, which is summarized in Algorithm 1, consists of the following three steps at each iteration:

\begin{itemize}
\item The first step constructs the training set $\mathcal{T}$ from $\mathcal{X}$. However, if we consider all trials in $\mathcal{X}$ during the construction of $\mathcal{T}$, the size of $\mathcal{T}$ becomes enormous. To prevent the overload of computing, we construct a pairwise set $\mathcal{T}^t$ at each iteration by a random sampling strategy as follows. We first randomly select $s$ speakers from $\mathcal{X}$, then randomly select two embedding vectors from each of the selected speakers, and finally construct $\mathcal{T}^t$ by a full permutation of the $2s$ embedding vectors. $\mathcal{T}^t$ contains $s$ true training trials and $s(2s-1)-s$ imposter training trials.

\item The second step calculates ${\mathcal{N}}_{0}^t$ according to \eqref{eq:N0},  and calculates $\mathbf{P}^t$ and $\mathbf{P}_\mathcal{P}^t$ according to \eqref{index_matrix2} and \eqref{index_matrix3} respectively.

\item The third step updates $\mathbf{M}$ by PPA \cite{huo2018cross}, which first applies eigenvalue decomposition to $\mathbf{X}=\mathbf{M}_t-\eta(\mathbf{P}^t+\gamma \mathbf{P}_\mathcal{P}^t+\mu \mathbf{I}_0) $, i.e., $\mathbf{X}=\mathbf{UVU}^T$ where $\mathbf{V}={\rm diag}([v_1,v_2,\cdots,v_d])$ with $v_1 \geq v_2 \geq \cdots \geq v_d$, and then adopts the following updating equation:
\begin{equation}
   \phi_{\lambda}^+(\mathbf{x})=\mathbf{U}{\rm diag} ([\phi_{\lambda}^+(v_1),\cdots,\phi_{\lambda}^+(v_d)])\mathbf{U}^T,
\end{equation}
where $\phi_{\lambda}^+(v)=\left[(v^2+4\lambda)^{1/2}+v\right]/2$, and $d$ is the  dimension of the input feature.
\end{itemize}

\subsection{Connection to the back-ends trained with training trials.}\label{sec:trip}

There are two basic classes of back-ends depending on how they construct the training data. One class takes training utterances as the training data for training a generative PLDA. The other groups training utterances into training trials for training binary-class classifiers, in which back-ends differ in two aspects---basic classifiers and loss functions. Here we focus on discussing the difference between the loss functions of the second class, which include the pairwise SVM \cite{burget2011discriminatively,cumani2011fast,cumani2014large}, triplet-loss-based, and pAUCMetric back-ends whose loss functions are denoted as the classification-loss, triplet-loss, and pAUC-loss \eqref{eq:loss}, respectively.

The classification-loss \cite{cumani2014large}, triplet-loss \cite{zhang2018text}, and pAUC-loss all use the hinge loss function to relax the $0/1$-loss. The only difference between them is how the errors are accumulated.
The classification-loss accumulates the classification error, which suffers from the class-imbalance problem of speaker verification. In contrast, the pAUC-loss focuses on the ranking of the similarity scores; So, it does not suffer from the class-imbalance problem.

The triplet-loss requires that the features from the same speaker are closer than those from different speakers in a triplet trial  \cite{zhang2018text}, i.e.,
\begin{equation}
\label{eq:trip}
    S(\mathbf{x}^a,\mathbf{x}^n;\mathbf{M})-S(\mathbf{x}^a,\mathbf{x}^p;\mathbf{M}) >  \delta
\end{equation}
where $\mathbf{x}^a$, $\mathbf{x}^p$, and $\mathbf{x}^n$ represent, respectively, the anchor, positive, and negative utterances of a trial. For clarity, we denote the speaker features in a constraint as a \textit{relative constraint}. For example, we call $\{\mathbf{x}^a, \mathbf{x}^p, \mathbf{x}^n\}$ in \eqref{eq:trip} as a {relative constraint} of the triplet-loss. The difference between the triplet-loss \eqref{eq:trip} and pAUC-loss \eqref{eq:loss} lies in the following three aspects.

First, the relative constraints of the triplet-loss \eqref{eq:trip} are triplet, which cannot deal with the situation where the training data contains only positive or negative trials. While the relative constraints of the pAUC-loss \eqref{eq:loss} are tetrad, which matches the pipeline of speaker verification. Therefore, \eqref{eq:loss} does not have the same limitation as \eqref{eq:trip}. Second, the pAUC-loss is intrinsically able to pick difficult training trials from the exponentially large number of training trials, while the triplet-loss lacks such an ability (that is why it has to use additional training trial selection methods). Third, as proven in Appendix B, the relative constraints of the triplet-loss are a subset of the relative constraints of the AUC-loss. As a result, the triplet-loss is a specifical case of the AUC-loss. Moreover, because the AUC-loss is a specifical case of the pAUC-loss with $\alpha= 0$ and $\beta=1$, we conclude that the triplet-loss is a specifical case of the pAUC-loss.

To validate the above analysis, we propose a new triplet-loss based algorithm for experimental comparison, which differs from pAUCMetric only in the loss function. The new algorithm, named TripletMetric, replaces the tetrad constraints \eqref{eq:loss} with the triplet constraints \eqref{eq:trip}. Its training data are constructed in a similar way as $\mathcal{T}$ in Section \ref{subsec:opt_alg}, which randomly selects $s$ speakers with each speaker selecting two embedding vectors. The number of the training triplet trials is $2s(2s-2)$. Note that, because the number of the tetrad constraints in \eqref{eq:loss} is $s[(s(2s-1)-s)(\beta-\alpha)]$, the ratio of the number of the training trials of pAUCMetric to that of TripletMetric is $\frac{2}{s(\beta-\alpha)}$.

\begin{algorithm}[!t]
\label{alg:PPA}
  \SetKwInOut{Input}{\textbf{Require}}   
  \SetKwInOut{Output}{\textbf{Output}}
  \caption{ Mini-batch PPA \cite{huo2018cross} algorithm for pAUCMetric  optimization.}
  \Input{\\
         Development set: $\mathcal{X}$;    \\
         False positive rate: $\alpha \geq 0,\beta>0$; \\
         Hyperparameter: $\delta \geq  0$,  $\gamma \geq 0$, $ \mu \geq  0$; \\
         Batch size: $s$; \\
         Step size parameter: $\eta>0$; \\
         Initialize: $ t \leftarrow 0 $, $\mathbf{M}_0=\mathbf{I}_0$, where $\mathbf{I}_0$ is the\\
         ~~~~~~~~~~~~identity matrix;
        }
    \BlankLine
    \Repeat{\text{converged}}
        {
          Construct a mini-batch subset of $\mathcal{X}$ by random sampling; \\
          Construct $\mathcal{T}$ from the subset of $\mathcal{X}$ by \eqref{eq:T}\\
          Compute $\mathcal{P}$ and $\mathcal{N}_0$ by \eqref{eq:true} and \eqref{eq:N0}; \\
          Calculate $\mathbf{P}^t$ and $\mathbf{P}_\mathcal{P}^t$ on  $\mathcal{P}$ and ${\mathcal{N}_0}$ by \eqref{index_matrix2} and \eqref{index_matrix3}; \\
          $\mathbf M_{t+1}\leftarrow \phi_{\lambda}^+(\mathbf{M}_t-\eta(\mathbf{P}^t+\gamma \mathbf{P}_\mathcal{P}^t+\mu \mathbf{I}_0))$, where $\lambda=\eta\mu$;\\
          $ t \leftarrow t+1 $. \\
          }
    \Output{$\mathbf{M}_t$ }
\end{algorithm}

\section{Complexity analysis}
\label{sec:complexity}


\begin{myTheo}\label{thm:1}
 The computational complexity  of pAUCMetric is:

\begin{equation}\label{eq:complexity}
    O= \mathcal{O}(d^2(I+J+R))+\mathcal{O}(JR)+\mathcal{O}(d^3)+\mathcal{O}(K{log}_2K)
\end{equation}
where $I$, $J$, $R$, and $K$ are the size of $\mathcal{T}$, $\mathcal{P}$, $\mathcal{N}_0$, and $\mathcal{N}$, respectively, and $d$ is the dimension of the input feature. 
\end{myTheo}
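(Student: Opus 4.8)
The plan is to bound the per-iteration cost of Algorithm~1 by walking through its three steps and summing their contributions, recalling that $I=J+K$ so that the imposter scores are computed as a by-product of processing all trials. First I would treat the construction of $\mathcal{T}$: forming each difference vector $\mathbf{z}_i=\mathbf{x}_{q_1}-\mathbf{x}_{q_2}$ costs $\mathcal{O}(d)$, so building all $I$ pairs is $\mathcal{O}(dI)$, which will be subsumed by the score-evaluation cost below.

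For the second step I would account separately for evaluating the Mahalanobis scores, selecting $\mathcal{N}_0$, and assembling the matrices. Each score $S(\mathbf{z}_i;\mathbf{M})=\mathbf{z}_i^T\mathbf{M}\mathbf{z}_i$ in \eqref{eq:Mahalanobis} is a matrix--vector product followed by an inner product at $\mathcal{O}(d^2)$; evaluating it over all $I$ trials therefore costs $\mathcal{O}(d^2 I)$ and simultaneously furnishes the scores of the $K$ imposter trials. To form $\mathcal{N}_0$ as in \eqref{eq:N0} I would sort the $K$ cached imposter scores in ascending order at $\mathcal{O}(K\log_2 K)$ and read off the $R$ entries ranked between the $k_\alpha$th and $k_\beta$th positions in $\mathcal{O}(R)$ time. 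Computing the index matrix $\mathbf{\Pi}$ of \eqref{index_matrix} then requires one comparison per $(j,r)$ pair, hence $\mathcal{O}(JR)$.

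The crucial step, and the one I expect to be the main obstacle in matching the claimed bound, is the evaluation of $\mathbf{P}$ in \eqref{index_matrix3}. A literal reading of its double sum accumulates $JR$ rank-one outer products, each of cost $\mathcal{O}(d^2)$, which would produce an $\mathcal{O}(d^2 JR)$ term absent from \eqref{eq:complexity}. The way around this is to reuse the factorization already exposed in the proof of Lemma~\ref{cor:1} via \eqref{eq:loss_probxxx}: precompute the row and column sums $p_j=\frac{1}{R}\sum_r\mathbf{\Pi}(j,r)$ and $p_r=\frac{1}{J}\sum_j\mathbf{\Pi}(j,r)$ of $\mathbf{\Pi}$ at an additional $\mathcal{O}(JR)$ already budgeted, after which $\mathbf{P}$ collapses to $\frac{1}{J}\sum_j p_j\mathbf{z}_j^{+}\mathbf{z}_j^{+T}-\frac{1}{R}\sum_r p_r\mathbf{z}_r^{-}\mathbf{z}_r^{-T}$, a sum of only $J+R$ weighted outer products at $\mathcal{O}(d^2(J+R))$. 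The matrix $\mathbf{P}_{\mathcal{P}}$ in \eqref{index_matrix2} is a sum of $J$ outer products at $\mathcal{O}(d^2 J)$. Together these absorb the $d^2 J$, $d^2 R$, and $JR$ contributions.

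Finally I would bound the third step. Forming $\mathbf{X}=\mathbf{M}_t-\eta(\mathbf{P}^t+\gamma\mathbf{P}_{\mathcal{P}}^t+\mu\mathbf{I}_0)$ is an $\mathcal{O}(d^2)$ matrix operation, while both its eigenvalue decomposition and the reconstruction $\mathbf{U}\,{\rm diag}(\cdots)\,\mathbf{U}^T$ defining $\phi_{\lambda}^+$ are $\mathcal{O}(d^3)$ on a $d\times d$ matrix. Summing the dominant contributions $\mathcal{O}(d^2 I)$, $\mathcal{O}(d^2(J+R))$, $\mathcal{O}(JR)$, $\mathcal{O}(K\log_2 K)$, and $\mathcal{O}(d^3)$, and folding the $d^2 I$, $d^2 J$, $d^2 R$ terms into $\mathcal{O}(d^2(I+J+R))$, yields exactly \eqref{eq:complexity}.
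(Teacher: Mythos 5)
Your proposal is correct and follows the same three-part decomposition as the paper's proof: score computation and sorting to obtain $\mathcal{P}$ and $\mathcal{N}_0$, assembly of $\mathbf{\Pi}$, $\mathbf{P}$, and $\mathbf{P}_{\mathcal{P}}$, and the $\mathcal{O}(d^3)$ eigendecomposition update. In fact your treatment of the second step is more careful than the paper's: the paper simply asserts the $\mathcal{O}(d^2J+d^2R)$ cost for \eqref{index_matrix3} without comment, whereas you correctly observe that a literal evaluation of the double sum would cost $\mathcal{O}(d^2JR)$ and supply the missing ingredient --- the row/column-sum factorization of $\mathbf{\Pi}$ (the $p_j$, $p_r$ weights from the proof of Lemma \ref{cor:1}) --- that reduces it to $J+R$ weighted outer products and makes the claimed bound actually attainable.
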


\begin{proof}
According to Algorithm 1, the computational complexity of pAUCMetric is composed of three parts:

The first part is the computation of  $\mathcal{P}$ and $\mathcal{N}_0$.  We first need $\mathcal{O}(I)$ operations to separate the positive and negative trials in $\mathcal{T}$. Then, computing the squared  Mahalanobis distances between all training pairs according to \eqref{eq:Mahalanobis} consumes $\mathcal{O}(d^2I)$ multiplications. Finally, we need $\mathcal{O}(K{log}_2K)$ operations to sort all scores of $\mathcal{N}$ for $\mathcal{N}_0$. Thus, the total computational complexity of the first part is:
\begin{equation}
    O_1=\mathcal{O}(I)+\mathcal{O}(d^2I)+\mathcal{O}(K{log}_2K).
\end{equation}

The second part is  the  computation of  $\mathbf{P}^t$ and $\mathbf{P}_\mathcal{P}^t$ . First, computing $\mathbf{\Pi}(j,r)$ according to \eqref{index_matrix} needs $\mathcal{O}(JR)$ operations. Then, computation of $\mathbf{P}_\mathcal{P}^t$  and  $\mathbf{P}^t$ needs $\mathcal{O}(d^2J)$ and $\mathcal{O}(d^2J+d^2R)$ multiplications respectively. The total computational complexity of the second part is therefore:
\begin{equation}
    O_2=\mathcal{O}(JR)+\mathcal{O}(d^2J)+\mathcal{O}(d^2J+d^2R).
\end{equation}

The third part is the computation of $\mathbf{M}_{t}$. Both of the eigenvalue decomposition and the updating procedure consume $\mathcal{O}(d^3)$ multiplications. Therefore, the third part has a complexity of:
\begin{equation}
    O_3=\mathcal{O}(d^3).
\end{equation}

Summing the above three parts gives \eqref{eq:complexity}, which completes the proof.
\end{proof}

Because the value of $d$ is relatively small, the overall computational complexity depends mainly on the complexity of computing the $\mathbf{\Pi}(j,r)$ matrix, which is quadratic with respect to $\mathcal{P}$, $\mathcal{N}_0$. The complexity of computing $\mathbf{\Pi}(j,r)$ is reduced by the random sampling strategy described in Section \ref{subsec:opt_alg}, which leads the following corollary.

\begin{myCor}\label{cal}
  Given the batch size $s$, the computational complexity of pAUCMetric is reduced to
  \begin{equation}
    \mathcal{O}(2cs^3)
  \end{equation}
  where $c$ is a coefficient related to the FPR range $[\alpha,\beta]$.
\end{myCor}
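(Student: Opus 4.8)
The plan is to specialize the general bound of Theorem~\ref{thm:1} to the trial counts produced by the random sampling strategy of Section~\ref{subsec:opt_alg}, and then to absorb the subdominant terms into the leading one by treating $d$ as a constant. First I would record the cardinalities induced by drawing $s$ speakers with two embeddings each and forming all (unordered) pairs of the resulting $2s$ vectors: this yields a pairwise set of size $I = s(2s-1)$, split into $J = s$ positive trials and $K = s(2s-1)-s = 2s(s-1)$ negative trials. Hence $I$ and $K$ are $\mathcal{O}(s^2)$ while $J$ is $\mathcal{O}(s)$.

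Next I would express the size $R = |\mathcal{N}_0|$ of the selected negative subset. Since $\mathcal{N}_0$ in~\eqref{eq:N0} retains the trials ranked between the $k_\alpha$-th and $k_\beta$-th positions with $k_\alpha=\lceil K\alpha\rceil$ and $k_\beta=\lfloor K\beta\rfloor$, its size satisfies $R \approx (\beta-\alpha)K = 2(\beta-\alpha)s(s-1)$, which is again $\mathcal{O}(s^2)$ and, crucially, carries the FPR-range factor $\beta-\alpha$.

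I would then substitute these counts into~\eqref{eq:complexity} while regarding $d$ as fixed, as justified by the preceding remark that $d$ is small. Under this convention the term $\mathcal{O}(d^2(I+J+R))$ collapses to $\mathcal{O}(s^2)$, the term $\mathcal{O}(d^3)$ becomes $\mathcal{O}(1)$, and $\mathcal{O}(K\log_2 K)$ becomes $\mathcal{O}(s^2\log s)$; all three are dominated by the quadratic-pair term $\mathcal{O}(JR)$. Evaluating this term explicitly gives $JR = s\,R = 2(\beta-\alpha)s^2(s-1)$, whose leading behavior is $2(\beta-\alpha)s^3$. Setting $c=\beta-\alpha$ then produces the claimed $\mathcal{O}(2cs^3)$.

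The routine bookkeeping aside, the only delicate step will be verifying that $\mathcal{O}(JR)$ genuinely dominates once $d$ is fixed, and that the factor $2c$ emerges exactly from $J=s$ multiplied by $R\approx 2(\beta-\alpha)s^2$ rather than from any double counting of ordered pairs. I would be careful to trace the factor of $2$ back to $K = 2s(s-1)$, and to state that the asymptotic identification legitimately discards the lower-order $s^2$, $s^2\log s$, and constant contributions.
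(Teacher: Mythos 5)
Your proposal is correct and follows essentially the same route as the paper's proof: you substitute the mini-batch cardinalities $I=s(2s-1)$, $J=s$, $K=2s(s-1)$, and $R=cK$ (with $c\approx\beta-\alpha$) into the bound of Theorem~\ref{thm:1} and observe that the $\mathcal{O}(JR)=\mathcal{O}(2cs^3)$ term dominates once $d$ is treated as small. The only difference is one of detail: the paper states the counts and the conclusion tersely, keeping $\mathcal{O}(d^3)$ explicit before dismissing it, whereas you additionally spell out why the $\mathcal{O}(d^2 s^2)$ and $\mathcal{O}(s^2\log s)$ terms are subdominant and identify the constant $c$ with $\beta-\alpha$, which is consistent with the paper's description of $c$ as a coefficient related to the FPR range.
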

\begin{proof}
  According to Section \ref{subsec:opt_alg}, we have $I=2s^2-s$, $J=s$, $K=2s^2-2s$, and $R=c(2s^2-2s)$. Therefore, the computational complexity is reduced to $\mathcal{O}(2cs^3)+\mathcal{O}(d^3)$. Because the dimension $d$ is small, the computational complexity depends mainly on $s$ only.
\end{proof}

Corollary \ref{cal} shows that the computational complexity of pAUCMetric is  cubic with respect to $s$. As will be shown in Section \ref{sec:Complexity}, pAUCMetric can achieve good performance with  a small value of $s$.

\section{The input features  of pAUCMetric}
\label{sec:feature}

After the feature extraction by a front-end, one needs to preprocess the features for boosting the performance of pAUCMetric as shown in Fig. \ref{fig:back_end_framework}. This section presents two preprocessing techniques.

\subsection{Length-normalization}

Given a speaker feature $\mathbf{y}$ from a front-end, we use the length-normalized feature \cite{garcia2011analysis} $\mathbf{x}$ as the input to pAUCMetric:
\begin{equation}
  \mathbf{x} = \frac{\mathbf{y}}{\|\mathbf{y}\|_2}
\end{equation}
The underlying reason for this normalization is as follows.
Learning a transform matrix in the cosine similarity scoring framework, i.e.,
 \begin{equation}
\label{eq:cosine_scores}
    S_{\rm cos}(\mathbf{y}_{1},\mathbf{y}_{2}; \mathbf{M})=\frac{\langle \mathbf{A}\mathbf{y}_{1},\mathbf{A}\mathbf{y}_{2} \rangle}{\|\mathbf{A}\mathbf{y}_{1} \|_2\|\mathbf{A}\mathbf{y}_{2}\|_2}
\end{equation}
has been studied extensively, e.g. \cite{dehak2011front}.
However, the learning problem is nonlinear and non-convex. Existing methods either learn $\mathbf{A}$ independently by, e.g., LDA, WCCN \cite{dehak2011front}, or learn $\mathbf{A}$ in the above framework with a good initialization \cite{bai2018cosine}. Both ways are suboptimal.

The Euclidean distance scoring is empirically inferior to the cosine similarity scoring when given the same input $\mathbf{y}$. But it is equivalent to the cosine similarity scoring if its input is the length-normalized feature $\mathbf{x}$ since
\begin{align}
\label{euclid_cosine}
   S_{\rm Euc}(\mathbf{x}_{1},\mathbf{x}_{2}) &=\|\mathbf{x}_{1}\|_2^2+\|\mathbf{x}_{2}\|_2^2-2 \langle \mathbf{x}_{1}, \mathbf{x}_{2} \rangle \nonumber \\
   &=2-2S_{\rm cos}(\mathbf{y}_{1},\mathbf{y}_{2}).
\end{align}
More importantly, learning $\mathbf{A}$ in the following Euclidean distance scoring framework does not suffer from the nonlinear and non-convex issues:
\begin{align}\label{eq:mahalanobis}
   S_{\rm Euc}(\mathbf{x}_{1},\mathbf{x}_{2};\mathbf{A}) & = \|\mathbf{A} \mathbf{x}_{1}-\mathbf{A} \mathbf{x}_{2}\|_2^2  \nonumber\\ &=(\mathbf{x}_{1}-\mathbf{x}_{2})^T\mathbf{M}(\mathbf{x}_{1}-\mathbf{x}_{2})\nonumber\\
   &=S(\mathbf{x}_{1},\mathbf{x}_{2};\mathbf{M})
\end{align}
where $\mathbf{M}=\mathbf{A}^T\mathbf{A}$ and $S(\cdot)$ is the scoring function of our pAUCMetric.

\subsection{PLDA-based preprocessing}

\begin{table*}[!t]
\renewcommand{\arraystretch}{1.2}
   \caption{Parameter settings of front-ends. The terms ``dim'' and ``mix'' is short for \textit{dimensions} and \textit{mixtures} respectively. The terms $\Delta$ and $\Delta\Delta$ denote the delta and delta-delta coefficients of MFCCs respectively.}
   \label{tab:swbd_sre}
   \centering
\scalebox{1} {
   \begin{tabular}{lcccc}
   \hline
   \hline
   Systems&& i-vector && x-vector \\
   \hline
    8 kHZ system && 20-dim MFCCs +$\Delta$ + $\Delta\Delta$ / 2048-mix GMM / 600-dim i-vector  &&  23-dim MFCCs / 512-dim x-vector\\
   16 kHZ system && 24-dim MFCCs +$\Delta$ + $\Delta\Delta$ / 2048-mix GMM / 400-dim i-vector  &&  30-dim MFCCs / 512-dim x-vector\\
  \hline
  \hline
\end{tabular}
}
\end{table*}

Two kinds of PLDA algorithms have been widely adopted in speaker verification, i.e., the simplified PLDA \cite{garcia2011analysis, ioffe2006probabilistic} and the two-covariance based PLDA \cite{brummer2010speaker}. We adopt the latent variables of the simplified PLDA \cite{ioffe2006probabilistic} as the input features of pAUCMetric.
 It generates a centralized feature $\mathbf{x}$ by first generating a speaker center $\mathbf{h}$ according to:
\begin{equation}
   \mathbf{h} \sim \mathcal{N}(0,\mathbf{\Phi}_b)
\end{equation}
and then generating the observation data according to:
\begin{equation}
   \mathbf{x} \sim \mathcal{N}(\mathbf{h},\mathbf{\Phi}_w),
\end{equation}
where $\mathbf{\Phi}_b$  is required to be positive semi-definite, and $\mathbf{\Phi}_w$  is required to be positive definite. The expectation maximization algorithm is employed to estimate the parameters. $\mathbf{\Phi}_b$ and $\mathbf{\Phi}_w$   can be simultaneously diagonalized by solving the following generalized eigenvalue  problem:
\begin{equation}
   \label{eq:LDA2}
   \mathbf{\Phi}_b \mathbf{w} =\psi \mathbf{\Phi}_w \mathbf{w},
\end{equation}
which leads to
\begin{eqnarray}
  &&\mathbf{W}\mathbf{\Phi}_b\mathbf{W}^T=\mathbf{\Psi}\\
   &&\mathbf{W} \mathbf{\Phi}_w \mathbf{W}^T=\mathbf{I}_0
\end{eqnarray}
where $\mathbf{W}$ is a square matrix whose columns are the generalized eigenvectors of \eqref{eq:LDA2}, $\mathbf{\Psi}$ is a diagonal matrix whose diagonal elements are the generalized eigenvalues of \eqref{eq:LDA2}, and $\mathbf{I}_0$ is the identity matrix.

Finally, the centralized feature is calculated as:
\begin{equation}\label{PLDA_mode}
   \mathbf{x} = \mathbf{W}^{-1}\mathbf{u},
\end{equation}
where $\mathbf{u} \sim \mathcal{N}(\mathbf{v},\mathbf{I}_0)$, and $\mathbf{v} \sim \mathcal{N}(\mathbf{v},\mathbf{\Psi})$, $\mathbf{v}$ represents the speaker, and $\mathbf{u}$ represents an example of that speaker in the latent space. Therefore, the example $\mathbf{x}$ in the original space is related to its latent representation  $\mathbf{u}$ via an invertible transformation $\mathbf{W}$.  We take the latent variable $\mathbf{u}$ as input features of pAUCMetric\footnotemark[1]. This preprocessing method adopts the advantage of the PLDA adaptation into the input features, which improves the overall performance of the speaker verification system.

\footnotetext[1]{Similar to the implementation of the PLDA  in Kaldi, we normalize $\mathbf{u}$ to $ \mathbf{u} \times \sqrt{\frac{ d}{\mathbf{u}^T(\mathbf{\Psi}+\mathbf{I}_0)^{-1}\mathbf{u}}}$, where $d$ is the dimension of $\mathbf{u}$.}

\section{Experiments} \label{sce:experiment}

\begin{table}[!t]
\renewcommand{\arraystretch}{1.2}
   \caption{Descriptions of training datasets. }
   \label{tab:traindata}
   \centering
\scalebox{0.9} {
   \begin{tabular}{lccc}
   \hline
   \hline
   & SWBD & SRE & VoxCeleb                \\
   \hline
   Languages        & English     & English (most) , others         & Multilingual                       \\
   \#Speakers       & 2,594       & 4,392                           & 7,363                         \\
   \#Recordings     & 28,181      & 64,388                          & 1,281,762                       \\
   Data sources     & Telephone   & Telephone, microphone           & Multi-media                    \\
   Environments     & Clean       & Clean                           & Real world noise     \\
  \hline
  \hline
\end{tabular}
}
\end{table}

\begin{table}[!t]
\renewcommand{\arraystretch}{1.2}
   \caption{Descriptions of evaluation datasets.}
   \label{tab:testdata}
   \centering
\scalebox{0.95} {
   \begin{tabular}{lccc}
   \hline
   \hline
   & SRE16 & SITW \\
   \hline
   Enrollment durations      & 60 $\sim$ 180 secs         & 6 $\sim$ 180 secs        \\
   Test durations            & 10 $\sim$ 60 secs          & 6 $\sim$ 180 secs        \\
   Data sources              & Telephone                  & Multi-media              \\
   Evaluation kinds          & Cantonese / Tagalog        & Dev.Core / Eval.Core     \\
  \#Evaluation trials       & 965,396 / 1,021,332         & 338,226 / 721,788              \\
  \hline
  \hline
\end{tabular}
}
\end{table}

In this section, we first present the datasets and  experimental settings and then the main results as well as analysis on the effects of the hyperparameters of pAUCMetric.

\subsection{Datasets}\label{subsec:data}

\subsubsection{Training datasets}

The training data consists of Switchboard (SWBD), NIST speaker recognition evaluation (SRE), and VoxCeleb database.
SWBD  consists of Switchboard Cellular 1 and 2 as well as Switchboard 2 Phase 1, 2, and 3. It contains $28,181$ English utterances from $2,594$ speakers. The SRE database consists of NIST SREs from 2004 to 2010 along with Mixer 6. It contains  $64,388$ telephone and microphone recordings from  $4,392$ speakers. Most of the utterances are in English, while some utterances are in Chinese, Russian, Arabic etc.
VoxCeleb consists of VoxCeleb1 \cite{Nagrani2017} and VoxCeleb2 \cite{Chung2018}, which contains over 1 million recordings from 7,363 celebrities. It is collected from real world noisy environments, therefore it contains background chatter, laughter and overlapping speech etc. In addition, we adopt the same data augmentation scheme as in \cite{snyder2018xvector} to further increase the amount and diversity of the training data. See Table \ref{tab:traindata} for summarization of the training data.

\subsubsection{Evaluation datasets}

The evaluation data include NIST SRE 2016 (SRE16) \cite{sre2016} and the Speakers in the Wild (SITW) \cite{mclaren2016speakers} datasets. Specifically, SRE16  contains two major languages---Cantonese and Tagalog. They are recorded in real-world noisy environments. The Cantonese language contains $965,393$ trials. The Tagalog language contains $1,021,332$ trials. The enrollment segments vary from 60 to 180 seconds, and the test utterances are about 10 to 180 seconds long.
The SITW is collected from open-source media, which contains real-world noise, reverberation, intra-speaker variability and compression artifacts. It contains 299 speakers. Each recording varies from 6 to 180 seconds. It has two evaluation tasks---Dev.Core which consists of $338,226$  trials, and Eval.Core, which consists of $721,788$ trials. See Table \ref{tab:testdata} for summarization of the evaluation data.

\begin{table}[!t]
\renewcommand{\arraystretch}{1.2}
   \caption{Output dimensions of the LDA in the back-ends, which are the default values of Kaldi.}
   \label{tab:LDA_par}
   \centering
\scalebox{1} {
   \begin{tabular}{lcccc}
   \hline
   \hline
   Systems & &i-vector & &x-vector \\
   \hline
    8 kHZ system  && 200 dim  & & 150 dim \\
    16 kHZ system && 150 dim  & & 128 dim \\
  \hline
  \hline
\end{tabular}
}
\end{table}

\begin{table*}[!t]
   \renewcommand{\arraystretch}{1.3}
   \caption{Comparison results of pAUCMetric and PLDA in the E1 evaluation scheme.}
   \label{tab:sre16}
   \centering
   \scalebox{0.83} {
   \begin{tabular}{llccccccccccccc}
   \hline
   \hline
   &  &  \multicolumn{5}{c}{i-vector} & & \multicolumn{5}{c}{x-vector}\\
   \cline{3-7} \cline{9-13}
   & Back-ends & EER(\%) & $\rm{minDCF}$  & $\rm pAUC_{[0,0.01]}$ & $\rm{AUC}$ &AP(\%) & & EER(\%)&$\rm{minDCF}$   &$\rm pAUC_{[0,0.01]}$ & $\rm{ACU}$&AP(\%) \\
   \hline
   \multirow{3}{*}{Cantonese}
   &PLDA            & 10.29        & 0.654        &   0.570       & 0.964       &69.38  &&  6.78       &   0.531       &0.689        & 0.982 & 80.24  \\
   &TripletMetric   & 10.22        & 0.667        &    0.559      & 0.965       &68.62  &&  6.42       &   0.527       &0.695        & 0.983 & 80.93   \\
   &pAUCMetric      & \textbf{9.52}&\textbf{0.649}&\textbf{0.578} &\textbf{0.969}&\textbf{70.58}&&\textbf{6.00}& \textbf{0.503}&\textbf{0.717}& \textbf{0.986}&\textbf{82.60} \\
  \hline
  \multirow{3}{*}{Tagalog}
   &PLDA            &\textbf{21.39}&\textbf{0.985}&\textbf{0.178} &\textbf{0.864}&\textbf{23.03}&&\textbf{18.34}& \textbf{0.977}&\textbf{0.218}& \textbf{0.894} &\textbf{28.90}  \\
   &TripletMetric  & 22.05        &0.985        & 0.175          & 0.859        &22.36  &&  18.42       & 0.980         &0.216         &0.894   &28.21   \\
   &pAUCMetric     & 21.85        & 0.985        & 0.175         &  0.860       &22.45&&  18.52       &  0.980        &0.218         & 0.894  & 28.39    \\
   \hline
   \hline
\end{tabular}
}
\end{table*}

\begin{table*}[!t]
   \renewcommand{\arraystretch}{1.3}
   \caption{Comparison results of pAUCMetric and PLDA-adp in the  E2 evaluation scheme.}
   \label{tab:adpsre16}
   \centering
   \scalebox{0.83} {
   \begin{tabular}{llccccccccccccc}
   \hline
   \hline
    &  &  \multicolumn{5}{c}{i-vector} & & \multicolumn{5}{c}{x-vector}\\
   \cline{3-7} \cline{9-13}
   & Back-ends & EER(\%) &$\rm{minDCF}$ &$\rm pAUC_{[0,0.01]}$ & $\rm{AUC}$  & AP(\%) & & EER(\%)  & $\rm{minDCF}$  &$\rm pAUC_{[0,0.01]}$ & $\rm{AUC}$ &AP(\%)\\
   \hline
   \multirow{3}{*}{Cantonese}
   &PLDA-adp           & 8.91         & 0.597        & 0.625        & 0.970  &74.35   && 4.80         &0.400          & 0.800   & 0.990   &88.26    \\
   &TripletMetric      & 8.21         & 0.586        & 0.641        & 0.976  &76.00   && 4.34         & 0.391         & 0.810   &0.992      &88.98        \\
   &pAUCMetric         & \textbf{7.93}&\textbf{0.577}&\textbf{0.646}&\textbf{0.977}&\textbf{76.60} && \textbf{4.19}&\textbf{0.379} & \textbf{0.818} & \textbf{0.993} &\textbf{89.58} \\
  \hline
   \multirow{3}{*}{Tagalog}
   &PLDA-adp           & 19.85        &0.892& 0.313        & 0.885   & 39.00   && 12.27        & 0.753 & 0.499      & 0.948   & 60.21   \\
   &TripletMetric      &\textbf{18.95} &\textbf{0.892}&\textbf{0.322}&\textbf{0.894}&\textbf{40.23}&&12.04& \textbf{0.750}& \textbf{0.506}& 0.950& \textbf{60.80}   \\
   &pAUCMetric         &19.11 & 0.896        &0.315&0.892&39.48 &&\textbf{11.97}& 0.754       & 0.503 &\textbf{0.951}&60.59 \\
   \hline
   \hline
\end{tabular}
}
\end{table*}

\subsection{Experimental settings} \label{subsec:baseline}

\subsubsection{Training schemes}
Due to different collection methods and sampling rates of the training data, we define two kinds of systems:

\begin{itemize}
  \item \textbf{8 kHZ system:} We adopt the augmented SWBD and SRE data, which include $220,569$ recordings in total, to train front-end feature extractors.  The back-ends are trained on the augmented SRE data. The signals originally sampled at 16 kHz are downsampled to 8 kHz.
  \item \textbf{16 kHZ system:} We use the VoxCeleb data to train an i-vector feature extractor, and use the augmented VoxCeleb data to train an x-vector feature extractor. We randomly selected $200,000$ recordings from the augmented VoxCeleb data to train the back-ends.
\end{itemize}

\subsubsection{Front-ends}
We use the GMM-UBM/i-vector and x-vector front-ends to extract speaker features. The front-ends are implemented using Kaldi \cite{povey2011kaldi}. Their parameter settings are also the same as in Kaldi, which are summarized in Table \ref{tab:swbd_sre}.

Specifically, for the i-vector extractor, the frame length is 25 ms, and the frame shift is 10 ms. The frame-level acoustic features of the 8 kHZ and 16 kHZ systems are 20- and 24-dimensional MFCCs respectively, which are further mean-normalized over a sliding window of 3 s. The final acoustic features are a concatenation of the MFCCs and their delta and delta-delta coefficients, which produces a total of 60-dimensional acoustic feature vector for the 8 kHZ system and 72-dimensional acoustic feature vector for the 16 kHZ system. An energy-based voice activity detector (VAD) is employed to remove non-speech frames. The number of Gaussian mixtures is set to 2048 for both the 8 kHZ and 16 kHZ systems. The dimension of the i-vectors is set to 600 for the 8 kHZ system, and 400 for the 16 kHZ system.

For the x-vector extractor, we used the standard Kaldi SRE16 and SITW recipes. Specifically, the frame-length is 25 ms, and the frame shift is 10 ms. The acoustic features of the 8 kHZ and  16 kHZ systems are 24- and 30-dimensional MFCCs, respectively, which are further mean-normalized over a sliding window of 3 s. The energy-based VAD is the same as that in the i-vector extractor. The 8 kHZ x-vector extractor is a pre-trained system provided at \textit{http://kaldi-asr.org/models/m3}. The 16 kHZ x-vector extractor is a newly trained system by Kaldi. The dimensions of the x-vectors in both the systems are set to 512.

\subsubsection{Back-ends}
We compare pAUCMetric with the state-of-art PLDA back-end and a commonly used cosine similarity scoring back-end. The parameter settings of the compared back-ends are summarized as following.

\begin{table*}[!t]
   \renewcommand{\arraystretch}{1.3}
   \caption{ Comparison results of pAUCMetric and PLDA in the  E3 evaluation scheme.}
   \label{tab:sitw_sre}
   \centering
   \scalebox{0.83} {
   \begin{tabular}{llccccccccccccc}
   \hline
   \hline
   &&  \multicolumn{5}{c}{i-vector} & & \multicolumn{5}{c}{x-vector} \\
   \cline{3-7} \cline{9-13}
   &Back-ends&EER(\%)&$\rm{minDCF}$& $\rm pAUC_{[0,0.01]}$ &$\rm{AUC}$ &AP(\%)&& EER(\%) & $\rm{minDCF}$ &$\rm pAUC_{[0,0.01]}$& $\rm{AUC}$&AP(\%)\\
   \hline
   \multirow{3}{*}{Dev.Core}
   & PLDA          & 9.20         & 0.619         & 0.590        & 0.972     &62.01&&6.85         & 0.513          &  0.697        & 0.985  & 72.31       \\
   &TripletMetric  & 9.70         & 0.624         & 0.583        & 0.970     &61.05     && 6.43        & 0.515          &  0.697        & 0.986  & 72.39 \\
   &pAUCMetric     &\textbf{8.73} &\textbf{0.605} &\textbf{0.600}&\textbf{0.974}&\textbf{62.99}&&\textbf{5.91}& \textbf{0.500} & \textbf{0.724}  & \textbf{0.988} &\textbf{74.87}  \\
   \hline
   \multirow{3}{*}{Eval.Core}
   &PLDA           &10.03         & 0.646        &0.563          & 0.969     &54.70&& 6.75        & 0.546          &0.674            & 0.984    & 66.13     \\
   &TripletMetric  &10.14         & 0.654        &0.552          & 0.968     &53.77 && 6.86        & 0.562         & 0.669     &0.984         &65.38     \\
   &pAUCMetric    &\textbf{9.65} &\textbf{0.639}&\textbf{0.571} &\textbf{0.970}&\textbf{55.73}&&\textbf{6.10}&\textbf{0.528}  &\textbf{0.703}   & \textbf{0.986} &\textbf{68.71} \\
  \hline
  \hline
\end{tabular}
}
\end{table*}

\begin{table*}[!t]
   \renewcommand{\arraystretch}{1.3}
   \caption{Comparison results of pAUCMetric and PLDA in the E4 evaluation scheme.}
   \label{tab:sitw_vox}
   \centering
   \scalebox{0.83} {
   \begin{tabular}{llccccccccccccc}
   \hline
   \hline
   &&  \multicolumn{5}{c}{i-vector} & & \multicolumn{5}{c}{x-vector}\\
   \cline{3-7} \cline{9-13}
   & Back-ends & EER(\%)&$\rm{minDCF}$&$\rm pAUC_{[0,0.01]}$& $\rm{AUC}$   &AP(\%) && EER(\%) &$\rm{minDCF}$&$\rm pAUC_{[0,0.01]}$ &$\rm{AUC}$& AP(\%)\\
   \hline
   \multirow{3}{*}{Dev.Core}
   &PLDA          & 5.30           & \textbf{0.418}& 0.772      & 0.989     &79.90 && 2.96         & 0.301        & 0.868        &  0.996       &88.69 \\
   &TripletMetric & 5.27           & 0.429        & 0.765     & 0.989      &79.32  && 2.77         & 0.309        &0.862         &  0.996  & 88.22     \\
   &pAUCMetric    &\textbf{4.93}   & 0.420   &\textbf{0.776}   &\textbf{0.990}& \textbf{80.3}&&  \textbf{2.58}&\textbf{0.289}&\textbf{0.880}&\textbf{0.997}& \textbf{89.71}  \\
   \hline
   \multirow{3}{*}{Eval.Core}
   &PLDA           & 5.72           & \textbf{0.453}& 0.746           &0.988    &73.91 &&  3.58        &0.333        & 0.847        & 0.995        &84.23\\
   &TripletMetric  & 6.04           & 0.464         & 0.738           &0.987    &73.10 &&  3.68        &0.341        &0.842      &0.995& 83.59    \\
   &pAUCMetric     &\textbf{5.49}   & 0.456         &\textbf{0.748}   &\textbf{0.988}& \textbf{74.32} && \textbf{3.23}&\textbf{0.316}&\textbf{0.861}&\textbf{0.996}&\textbf{85.43}     \\

  \hline
  \hline
\end{tabular}
}
\end{table*}

\begin{itemize}

\item \textbf{PLDA:} We first reduce the speaker features into a low dimensional vector by  linear discriminant analysis (LDA). Specifically, if the i-vector front-end is used, the LDA dimension is set to 200 for the 8 kHZ system and  150 for the 16 kHZ system. If the x-vector front-end is used, the LDA dimension is set to 150 and 128 in the  8 kHZ  system and 16 kHZ system, respectively. The dimensions of LDA are summarized in Table \ref{tab:LDA_par}. We use the output of LDA as the input of PLDA to compute the similarity scores.

    \item \textbf{Cosine similarity scoring (Cosine):} We adopt the same dimension reduction as that in Table \ref{tab:LDA_par} by LDA, and then use the dimension-reduced feature as the input of the cosine similarity scoring to make decisions.

\item \textbf{PLDA-adp:}
We conduct domain adaptation to the PLDA back-end of the 8 kHZ system by using an unlabeled major dataset in
NIST 2016 SRE, which consists of $2,272$ utterances. The adaptation technique is implemented in “kaldi-master/egs/sre16”
of Kaldi.

\item \textbf{pAUCMetric:}
We adopt the same dimension reduction as that in Table \ref{tab:LDA_par} by LDA. Then, the speaker features are preprocessed according to Section \ref{sec:feature}. At last, the preprocessed features are used as the input of pAUCMetric.
The default hyperparameters of pAUCMetric are as follows. $\alpha=0$, $\beta=0.01$, $\mu=10^{-3}$, $\eta=10$, and $s=500$. $\gamma$ is set to $0.5$ for the x-vector front-end, and set to $\gamma=0.1$ for the i-vector front-end. As will be shown in Section \ref{sec:parameter}, pAUCMetric performs robustly with a wide range of hyperparameter settings.

\item {\textbf{TripletMetric:} The algorithm was proposed in the last paragraph of Section \ref{sec:trip}.  Its hyperparameter setting is the same as that of pAUCMetric.}

\end{itemize}

We evaluated the studied methods using the calibration-insensitive metrics, including the EER, $\rm{minDCF}$ with $P_{tar}=0.01$ and equal costs of misses and false alarms, {pAUC} with $\alpha=0$, $\beta=0.01$ ($\rm pAUC_{[0,0.01]}$), AUC, and average precision (AP).

We also conducted an experiment in Section \ref{subsec:cal}, where we evaluated the performance by the calibration-sensitive metrics, including the actDCF with $P_{tar}=0.01$  and equal costs of misses and false alarms, and $\rm C_{llr}$.

\subsection{Results based on PLDA-based preprocessing } \label{subsec:plda}
This section presents the main experimental results of the pAUCMetric with the PLDA-based preprocessing technique.  We evaluate both the 8 kHZ and 16 kHZ systems on the SRE16 and SITW datasets, which contains the following four evaluation schemes:
\begin{itemize}
  \item \textbf{E1:} This scheme conducts the comparison on language mismatched conditions. The evaluation is carried out with the 8 kHZ system on the SRE16 dataset. Most training data of the 8 kHZ system are in English, while the SRE16 test data are in Cantonese and Tagalog languages.

  \item \textbf{E2:} Contrary to E1, this scheme conducts the comparison on language matched conditions. The evaluation is carried out with the 8 kHZ system on the SRE16 dataset as well, and furthermore, the domain adaptation technique is adopted. The input features of pAUCMetric are the latent variables of PLDA-adp.

  \item \textbf{E3:} This scheme makes an evaluation on channel and noise mismatched conditions. We conducted the evaluation with the 8 kHZ system on the SITW data, where we downsampled the SITW from 16 KHZ to 8 KHZ. The mismatched problem is caused by the fact that SITW is collected from multi-media videos, while the training data, i.e., SWBD and SRE, are collected from telephone or meeting conditions.

  \item \textbf{E4:} Contrary to E3, this scheme makes an evaluation on channel and noise matched conditions. Specifically, we make the evaluation with the 16 kHZ system on the SITW dataset. Both the SITW and VoxCeleb datasets are collected from multi-media videos.

\end{itemize}

The experimental results of E1 are presented in Table \ref{tab:sre16}. As seen, pAUCMetric achieves obvious performance improvement over PLDA  on the Cantonese language. Specifically, when the x-vector front-end is used, it obtains $11\%$ relative EER reduction and $5\%$ relative $\rm{minDCF}$  reduction; it also achieves $9\%$ relative $\rm pAUC_{[0,0.01]}$ improvement, $22\%$ relative $\rm{AUC}$ improvement, and $12\%$ relative $\rm AP$ improvement. When the i-vector front-end is used, it obtains $7\%$ relative EER reduction and $14\%$ relative $\rm AUC$ improvement. However, the experimental results of PLDA and pAUCMetric on the Tagalog language are not good, which may be due to the large mismatch between the Tagalog and the languages of the training data, as well as the fact that the Tagalog data is quite noisy.

\begin{figure}[t!]
  \centering
  \includegraphics[width=2.8in]{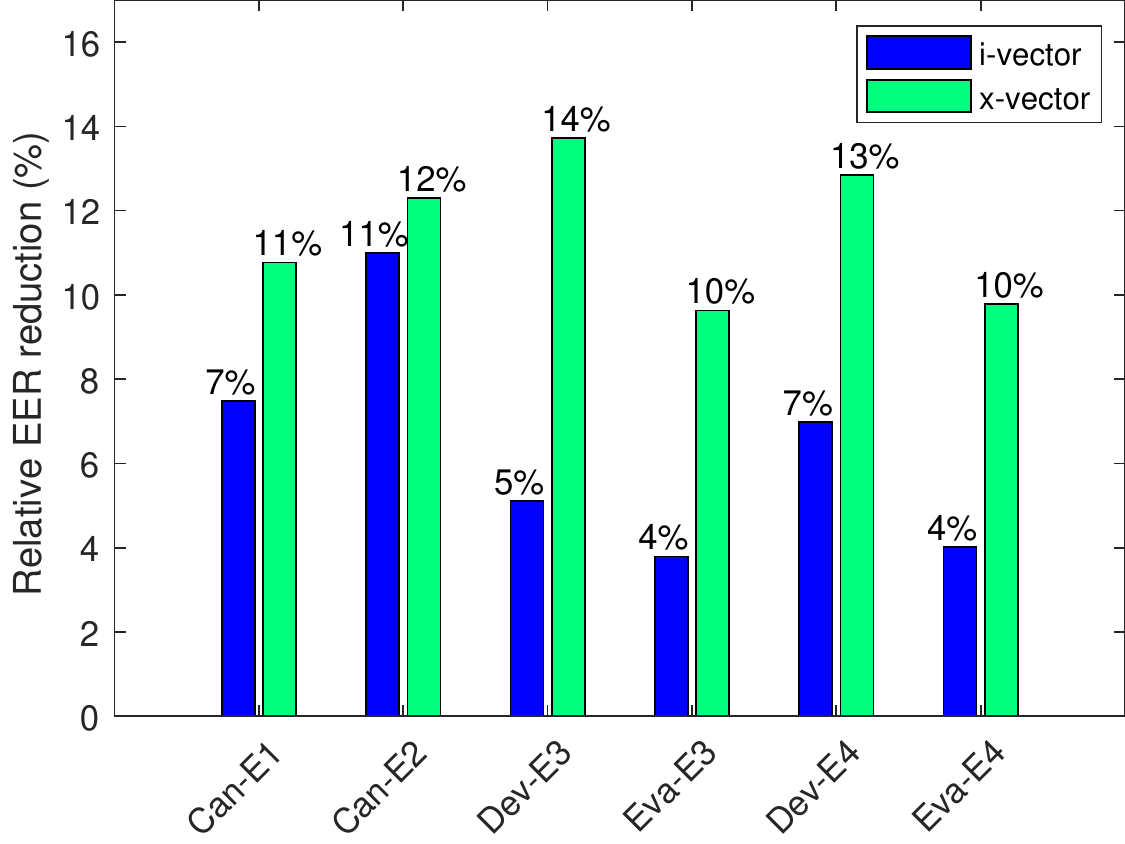}
  \caption{Relative EER reduction of pAUCMetric over PLDA. The terms ``Can'', ``Dev'', and ``Eva'' denote the Cantonese data of SRE16, the Dev.Core and Eval.Core tasks of SITW, respectively.  The terms ``E1'', ``E2'', ``E3'', and ``E4'' denote the four evaluation schemes.
}
  \label{fig:relative_eer_improvement}
\end{figure}

The experimental results of E2 are presented in Table \ref{tab:adpsre16}.  It is seen that pAUCMetric yields better performance than PLDA-adp, for both the i-vector and x-vector front-ends. Specifically, when the x-vector front-end is applied to the Cantonese language of SRE16, pAUCMetric obtains $13\%$ relative EER reduction and $5\%$ relative $\rm{minDCF}$ reduction respectively; it also achieves $9\%$ relative improvement in terms of $\rm pAUC_{[0,0.01]}$ and $30\%$ relative improvement in terms of $\rm{AUC}$. When the i-vector front-end is applied to the Cantonese language, pAUCMetric also obtains $11\%$ relative EER reduction and $23\%$ relative $\rm AUC$ improvement, respectively. pAUCMetric also achieves better performance than PLDA-adp on the Tagalog language.

\begin{figure}[t!]
  \centering
  \includegraphics[width=3.5in]{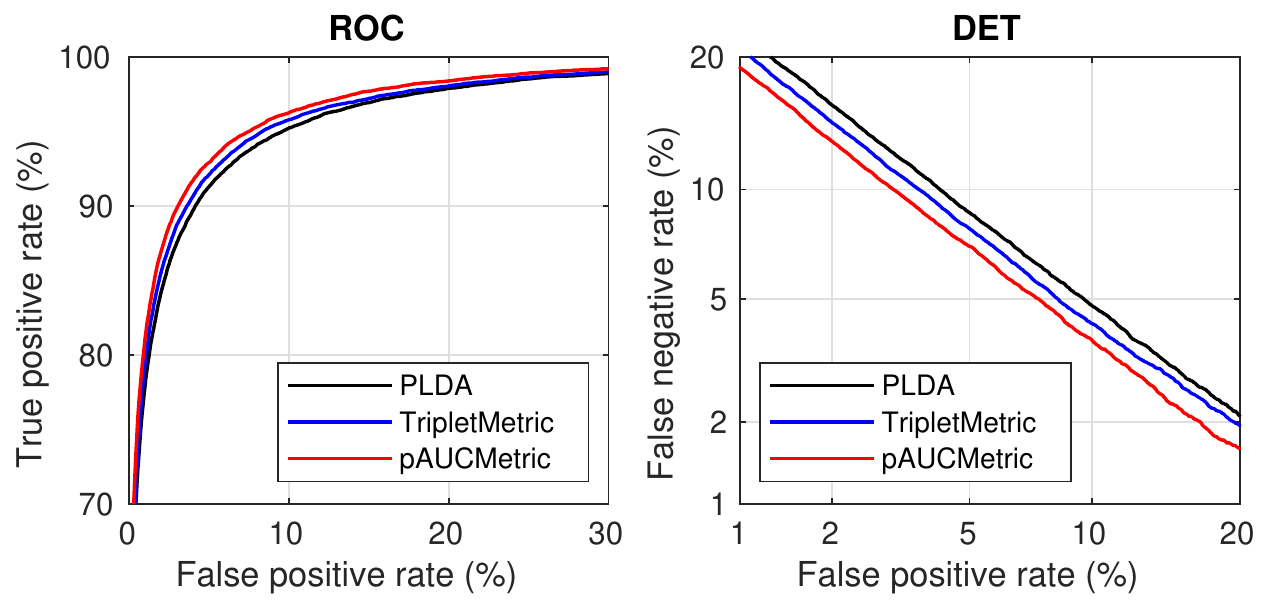}
  \caption{ROC and DET curves of the comparison methods with the x-vector front-end on the Cantonese data of SRE16 in the E1 evaluation scheme.}
  \label{fig:Cantonese_PLDA_det_curve}
\end{figure}

\begin{table*}[!t]
   \renewcommand{\arraystretch}{1.3}
   \caption{Comparison results of pAUCMetric and Cosine with the models of the 8 kHZ system.}
   \label{tab:SWBD-SRE-cosine}
   \centering
   \scalebox{0.83} {
   \begin{tabular}{llccccccccccccc}
   \hline
   \hline
   &&  \multicolumn{4}{c}{i-vector} & & \multicolumn{4}{c}{x-vector}\\
   \cline{3-7} \cline{9-13}
   &Back-ends&$\rm EER(\%)$ &$\rm{minDCF}$&$\rm pAUC_{[0,0.01]}$&$\rm{AUC}$  &AP(\%)&&EER(\%)&$\rm{minDCF}$&$\rm pAUC_{[0,0.01]}$ &$\rm{AUC}$ & AP(\%)\\
   \hline
   \multirow{3}{*}{Cantonese}
   &Cosine     &13.68         &0.744         &0.467         & 0.940        &59.10&& 9.25        &  0.606       &  0.613      & 0.968         & 73.23    \\
   &TripletMetric  &11.78     &0.715         &0.507         & 0.954        &63.60 && 6.89        &  0.551       &  0.677      &0.981           & 79.34  \\
   &pAUCMetric &\textbf{10.57}&\textbf{0.689}&\textbf{0.537}&\textbf{0.962}  &\textbf{66.75}&&\textbf{6.35}&\textbf{0.523}&\textbf{0.700}&\textbf{0.984} &\textbf{81.20} \\
   \hline
   \multirow{3}{*}{Dev.Core}
   &Cosine     &11.09         &  0.650       &0.552         & 0.960       &57.77&&   9.01      & 0.573        &   0.643      & 0.974           &66.53  \\
   &TripletMetric& 10.70      & 0.638        &0.568         & 0.966       &59.37 &&   6.89      & 0.526        &  0.684       & 0.984           &71.02  \\
   &pAUCMetric&\textbf{9.07}&\textbf{0.616}&\textbf{0.593}&\textbf{0.971}&\textbf{62.29}&&\textbf{5.94}&\textbf{0.497}&\textbf{0.719}&\textbf{0.987}&\textbf{74.41} \\
   \hline
   \multirow{3}{*}{Eval.Core}
   &Cosine     &11.86         & 0.684        &  0.519       & 0.956        &49.13&&  8.53       &  0.600       &  0.617       & 0.974       &59.92 \\
   &TripletMetric&10.83       &0.692         & 0.527        & 0.962        & 50.07    &&  7.20       &  0.576       &  0.648       & 0.982       &62.82  \\
   &pAUCMetric &\textbf{10.00}&\textbf{0.668}&\textbf{0.553}&\textbf{0.966}&\textbf{52.88}&&\textbf{6.40}&\textbf{0.539}&\textbf{0.689}&\textbf{0.985}&\textbf{67.09} \\

  \hline
  \hline
\end{tabular}
}
\end{table*}
\begin{table*}[!t]
   \renewcommand{\arraystretch}{1.3}
   \caption{Comparison results of pAUCMetric and  Cosine with the models of the 16 kHZ system.}
   \label{tab:sitw_vox-cosine}
   \centering
   \scalebox{0.83} {
   \begin{tabular}{llccccccccccccc}
   \hline
   \hline
   &&  \multicolumn{5}{c}{i-vector} & & \multicolumn{5}{c}{x-vector}\\
   \cline{3-7} \cline{9-13}
   & Back-ends &$\rm EER(\%)$&$\rm{minDCF}$ &$\rm pAUC_{[0,0.01]}$ & $\rm{AUC}$ &AP(\%)&& EER(\%) & $\rm{minDCF}$ & $\rm pAUC_{[0,0.01]}$ & $\rm{AUC}$&AP(\%)\\
   \hline
   \multirow{3}{*}{Dev.Core}
   &Cosine     & 7.30        &0.569         & 0.661       &  0.981      &68.20&& 4.62        & 0.472        & 0.758        & 0.991 & 77.68      \\
   &TripletMetric &6.12      & 0.532        & 0.691       &  0.987      &71.74&& 3.85        & 0.442        & 0.785        & 0.994 &  80.34    \\
   &pAUCMetric &\textbf{5.61}&\textbf{0.496}&\textbf{0.722}&\textbf{0.988}&\textbf{74.89}&&\textbf{3.35}&\textbf{0.352}&\textbf{0.834}&\textbf{0.995}&\textbf{85.35} \\
   \hline
   \multirow{3}{*}{Eval.Core}
   &Cosine     & 7.45        & 0.606        & 0.616        & 0.979     &60.60     &&  5.41       & 0.465        &  0.744       & 0.989  & 73.15  \\
   &TripletMetric& 6.40      & 0.581        & 0.644        & 0.984     & 62.94    &&  4.54       & 0.444        & 0.769        &0.992   & 75.69      \\
   &pAUCMetric &\textbf{5.96}&\textbf{0.547}&\textbf{0.679}&\textbf{0.986}&\textbf{66.51}&&\textbf{3.80}&\textbf{0.374}&\textbf{0.825}&\textbf{0.994}&\textbf{81.48}\\

  \hline
  \hline
\end{tabular}
}
\end{table*}

The experimental results of E3 are presented in Table \ref{tab:sitw_sre}. One can see that pAUCMetric achieves better performance than PLDA. Specifically, when the x-vector front-end is used, pAUCMetric achieves $10\%$ relative EER reduction, $3\%$ relative $\rm{minDCF}$ reduction, and  more than $8\%$ relative $\rm pAUC_{[0,0.01]}$ improvement on both the Dev.Core and Eval.Core tasks; it also obtains more than $20\%$ and $12\%$ relative AUC improvement on the Dev.Core and Eval.Core tasks, respectively. When the i-vector front-end is used, it also achieves better performance than PLDA.

The experimental results of E4 are presented in Table \ref{tab:sitw_vox}. From this table, one can see that pAUCMetric also yields better performance than PLDA. Specifically, when the x-vector front-end is used, it obtains approximately $10\%$ relative EER reduction; it also obtains about $9\%$ relative $\rm pAUC_{[0,0.01]}$ improvement, and more than $20\%$ relative $\rm{AUC}$ improvement. When the i-vector front-end is used, a similar experimental phenomenon is observed as well.

To summarize, when the x-vector front-end is used, pAUCMetric obtains about $10\%$ relative EER reduction, $9\%$ relative $\rm pAUC_{[0,0.01]}$ improvement, and more than $20\%$ relative $\rm{AUC}$ improvement over the state-of-the-art PLDA, except the Eval.Core task of the SITW dataset in the E3 evaluation scheme.
Although the performance improvement with the i-vector front-end is not so significant as that with the x-vector front-end, the trends are consistent. For clarity, the relative EER improvement of pAUCMetric over PLDA in different evaluation schemes is summarized in Fig. \ref{fig:relative_eer_improvement}.
pAUCMetric also achieves better performance than TripletMetric in all of the above four conditions.

Figure \ref{fig:Cantonese_PLDA_det_curve}  plots the ROC and  DET curves of the comparison methods with the x-vector front-end in the SRE16 Cantonese of the E1 evaluation scheme. It is seen from the figure that pAUCMetric yields  better  ROC and  DET curves than PLDA.
We further draw the DET curves of the $\rm E2 \sim E4$ schemes in Appendix C, where we also see the effectiveness of pAUCMetric.

\subsection{Results based on length-normalization  preprocessing}\label{subsec:cos}

\begin{table*}[!t]
   \renewcommand{\arraystretch}{1.3}
   \caption{EER results of the comparison back-ends with different input feature dimensions. The term ``length-normalization'' denotes the length normalization preprocessing. The term ``PLDA-based'' denotes the PLDA-based preprocessing.}
   \label{tab:dimension_analysis}
   \centering
   \scalebox{0.83} {
   \begin{tabular}{llcccccccc}
   \hline
   \hline
   & Back-ends &  50 dim  & 100 dim & 150 dim  & 200 dim  & 250 dim & 300 dim  & 350 dim  & 400 dim   \\
   \hline
   \multirow{2}{*}{Length-normalization }
   &Cosine      & 9.14  &  8.51  &  9.25   &  9.93  & 10.91  & 11.78  &  12.55 & 13.23    \\
   &pAUCMetric  & 8.80  &  6.85  &  6.35   &  6.50  &  6.62  & 6.90   &  7.17  & 7.43     \\
   \hline
   \multirow{2}{*}{PLDA-based}
   &PLDA        & 8.36   & 6.72   & 6.82   &  7.50   &  8.13  & 8.77   &9.26   &   9.67      \\
   &pAUCMetric  & 8.02   & 6.19   & 6.05   &  6.54   &  7.08  & 7.67   &7.89   &   8.18      \\
  \hline
  \hline
\end{tabular}
}
\end{table*}

\begin{figure}[t!]
  \centering
  \includegraphics[width=2.8in]{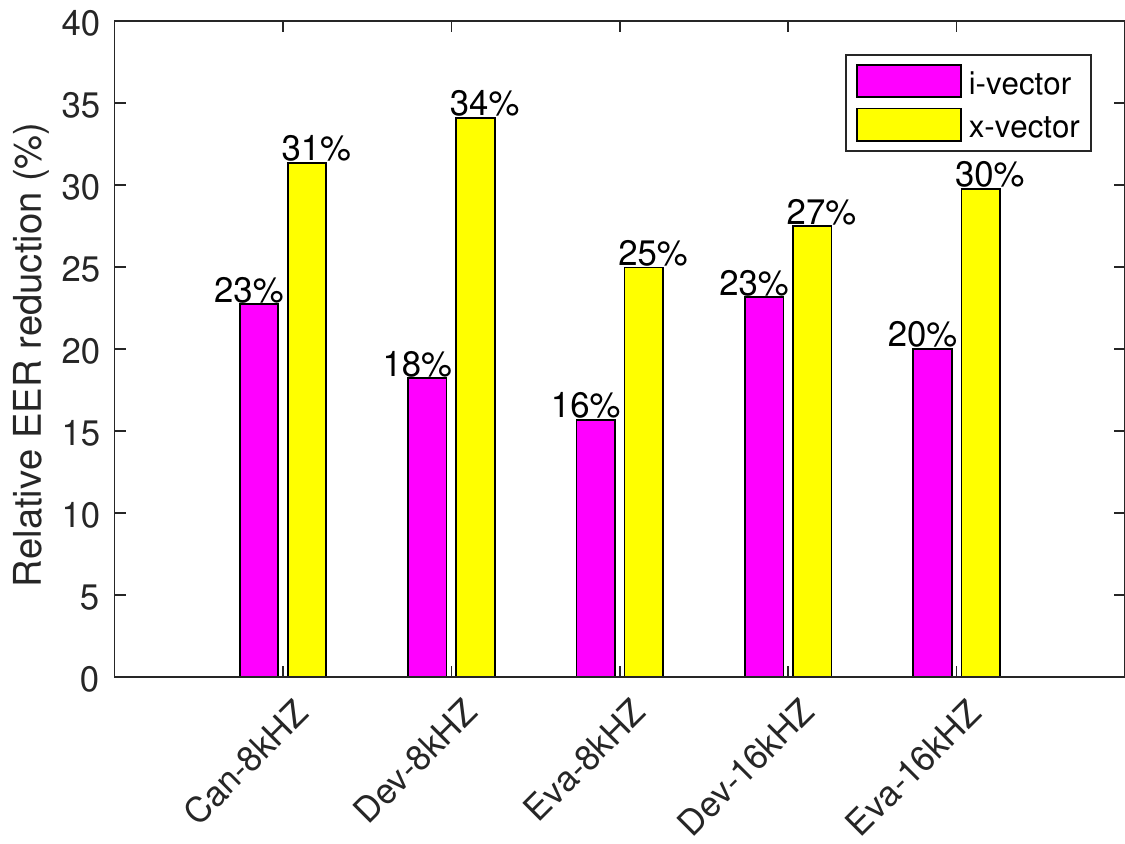}
  \caption{Relative EER reduction of pAUCMetric over Cosine. The terms ``Can'', ``Dev'', and ``Eva'' denote the Cantonese data of SRE16, the Dev.Core and Eval.Core. tasks of SITW, respectively.  The terms ``8kHZ'' and ``16kHZ'' denote the 8 kHZ system and 16 kHZ system respectively.}
  \label{fig:relative_EER_cosine}
\end{figure}

\begin{figure}[t!]
  \centering
  \includegraphics[width=3.5in]{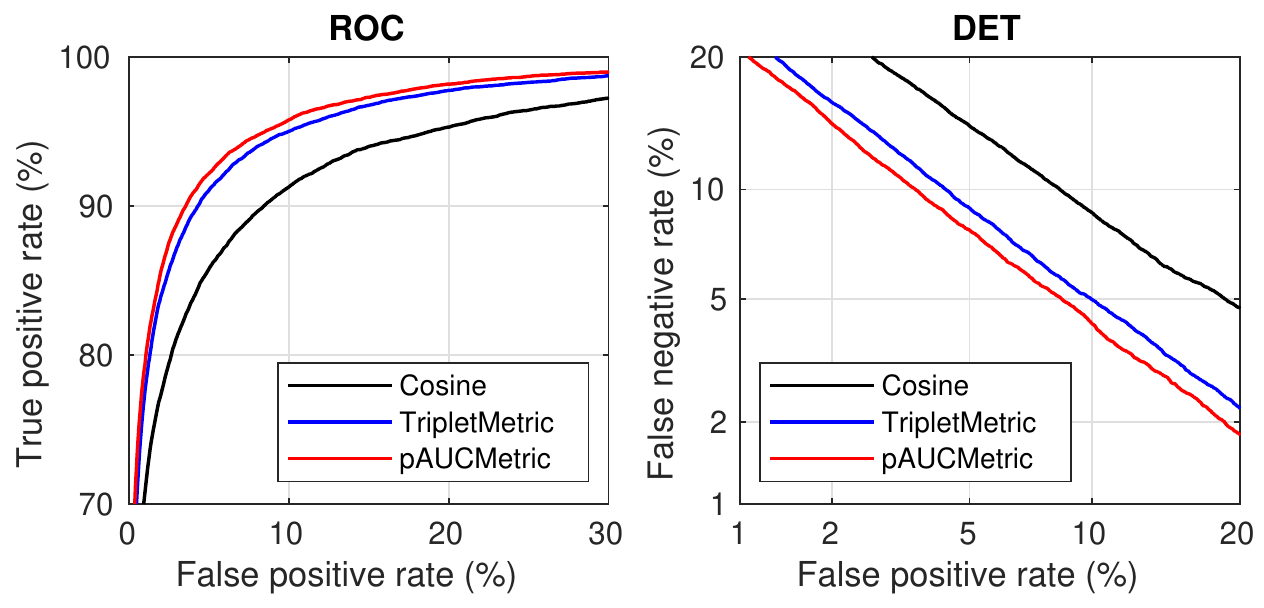}
  \caption{ROC and DET curves of the comparison methods with the x-vector front-end of the 8 kHZ system on the Cantonese data of SRE16.}
  \label{fig:Cantonese_PLDA_adp_det_curve}
\end{figure}

This section presents the main experimental results of the pAUCMetric with the length-normalization preprocessing technique. We compare it with the Cosine back-end.

Specifically, we first evaluate the 8 kHZ system on the Cantonese data of SRE16 and the Dev.Core and Eval.Core tasks of SITW. The experimental results are summarized in Table \ref{tab:SWBD-SRE-cosine}. As shown in the table, pAUCMetric achieves significant performance improvement over the Cosine back-end.
When the i-vector front-end is used, it obtains about $16\% $ to $ 23\%$ relative EER reduction, and approximately $2\% $ to $ 7\% $ $\rm{minDCF}$ reduction respectively; it also obtains about $7\% $ to $ 13\%$ relative improvement in terms of $\rm pAUC_{[0,0.01]}$, and about $23\%$ to $ 37\%$ relative improvement in terms of $\rm AUC$. When the x-vector front-end is used, pAUCMetric obtains more than  $25\%$ relative EER reduction; moreover, it obtains about $20\%$ relative $\rm pAUC_{[0,0.01]}$ improvement and more than $40\%$ relative $\rm AUC$ improvement.

Then, we evaluate the 16 kHZ system on the Dev.Core and Eval.Core tasks of SITW. The experimental results are summarized in Table \ref{tab:sitw_vox-cosine}.  One can see that pAUCMetric also yields significant performance improvement over the Cosine back-end.
For example, when the x-vector front-end is used, it obtains $27\%$ and $30\%$ relative EER reduction on the Dev.Core task and Eval.Core task respectively. It also obtains more than $40\%$ relative $\rm AUC$ improvement on both of the tasks. The performance trend with the i-vector front-end is consistent with the trend with the x-vector front-end.

To summarize, when the length-normalization is adopted to preprocess the speaker features, pAUCMetric achieves significant performance improvement over the Cosine back-end. For clarity, the relative EER improvement on different evaluation dataset is summarized in Fig. \ref{fig:relative_EER_cosine}. Moreover, the relative improvement of the pAUCMetric over PLDA with the x-vector front-end behaves better than that with the i-vector front-end. pAUCMetric also achieves better performance than TripletMetric, when the length-normalization preprocessing is adopted.

Figure \ref{fig:Cantonese_PLDA_adp_det_curve}  plots the ROC and DET curves of the comparison methods with the x-vector front-end on the SRE16 Cantonese data. It is seen from the figure that pAUCMetric yields better ROC and DET curves than Cosine.

\subsection{Calibration}\label{subsec:cal}
 In real applications, it is needed to present the verification result in terms of calibrated LLR \cite{khosravani2016aut}. So, we applied calibration to the all  the studied back-ends, i.e., Cosine, PLDA, TripletMetric, and pAUCMetric, with the linear logistic regression method of BOSARIS Toolkit\footnotemark[2], where the calibration model was trained on the Dev.Core dataset of SITW and evaluated on its Eval.Core dataset.
\footnotetext[2]{ Available at: \textit{https://sites.google.com/site/bosaristoolkit/}. }

 We conducted  experiments on the conditions of the 8 kHZ and 16 kHZ systems respectively. From the experimental results in Table \ref{tab:calibration}, one can see that pAUCMetric achieves better performance than the compared methods on all the four conditions in terms of $\rm actDCF$ and $\rm C_{llr}$.

\begin{table}[!t]
\centering
\caption{Score calibration results on the x-vector of Eval.Core task of the SITW dataset.}
\label{tab:calibration}
\scalebox{0.83} {
\begin{tabular}{llccccccc}
   \hline
   \hline

   &&\multicolumn{2}{c}{8 KHZ System} & & \multicolumn{2}{c}{16 KHZ System}\\
   \cline{3-4} \cline{6-7}
     Preprocessing&Back-ends &$\rm{actDCF}$  & $\rm C_{llr}$   && $\rm{actDCF}$ &$\rm C_{llr}$   \\
   \hline
   \multirow{3}{*}{Length-normalization }
   &Cosine        & 0.6029             &   0.3012            &&   0.4708            & 0.1961\\
   &TripletMetric  &0.5775              &   0.2590            &&   0.4494            & 0.1670     \\
   &pAUCMetric    &  \textbf{0.5404}   &   \textbf{0.2351}    &&   \textbf{0.3762}   &\textbf{0.1461} \\
   \hline
   \multirow{3}{*}{PLDA-based }
   &PLDA          & 0.5502      &0.2364             &&   0.3354      &0.1300    \\
    &TripletMetric & 0.5640      &0.2404             &&   0.3446      &0.1353      \\
   &pAUCMetric    & \textbf{0.5324}&\textbf{0.2209} && \textbf{0.3221}&\textbf{0.1199} \\

  \hline
  \hline
\end{tabular}
}
\end{table}

\subsection{Discussion} \label{subsec:discussion}
In this section, we first discuss the effect of the input feature dimension of pAUCMetric on performance, then analyze the effects of its hyperparameters, and at last discuss the computational complexity and performance with respect to the batch size $s$.

All discussions use the x-vector front-end of the 8 kHZ system to extract speaker features, and compare PLDA with the pAUCMetric that adopts the PLDA-based preprocessing on the Cantonese data of SRE16. No domain adaptation is adopted in the discussions.

\begin{table}[!t]
   \renewcommand{\arraystretch}{1.3}
   \caption{Relative EER reduction of pAUCMetric over PLDA with respect to $\gamma$ and $\mu$.}
   \label{tab:gamma_mu_EER}
   \centering
   \scalebox{0.85} {
   \begin{tabular}{l|cccccc|cc}
   \hline
   \hline
   \diagbox{$\gamma$}{$\mu$} & 0 & $10^{-7}$  &  $10^{-6}$ & $10^{-5}$ & $10^{-4}$ & $10^{-3}$ & $10^{-2}$ &$10^{-1}$  \\ 
   \hline
    0    & 7.5 & 7.4  & 7.8& 6.8 & 6.5 & 7.6 & 1.2 & -5.9 \\
    0.01 & 7.8 & 8.8  &7.8 & 7.5 & 7.6 & 7.6 & 1.7 &-6.2 \\
    0.05 & 8.4 & 9.2  &8.4 & 9.5 & 9.0 &10.3 & 2.0 &-6.1 \\
    0.10 & 9.4 & 7.4  &9.4 & 9.1 & 9.3 &10.2 & 1.7 &-6.0 \\
    0.50 & 9.6 & 9.5  &9.6 &10.8 &10.5 &11.0 & 1.9 &-6.2 \\
    1.00 & 8.0 & 10.6 &8.0 & 9.5 & 8.5 &10.9 & 2.1 &-6.0 \\
    1.50 & 6.9 & 5.5  &6.8 & 8.7 & 8.4 &8.6  & 2.7 &-5.8 \\
  \hline
   3.00  &-1.4 & -3.3 &-1.3&-3.6 &-0.2 &1.0  & 1.1 &-5.6 \\
  \hline
  \hline
\end{tabular}
}
\end{table}

\subsubsection{Effect of the input feature dimension on performance}
\begin{table}[!t]
   \renewcommand{\arraystretch}{1.3}
   \caption{Relative $\rm pAUC_{[0,0.01]}$ reduction of pAUCMetric over PLDA with respect to $\gamma$ and $\mu$.}
   \label{tab:gamma_mu_pAUC}
   \centering
   \scalebox{0.85} {
   \begin{tabular}{l|cccccc|cc}
   \hline
   \hline
   \diagbox{$\gamma$}{$\mu$}& 0 & $10^{-7}$  &  $10^{-6}$ & $10^{-5}$ & $10^{-4}$ & $10^{-3}$ & $10^{-2}$ &$10^{-1}$ \\
   \hline
       0 & 5.0 & 4.6  & 4.9  & 4.2 & 4.1 & 4.3 & -0.6 & -6.2\\
    0.01 & 5.2 & 5.4  &5.2 & 5.4 & 4.9 & 4.8 & -0.6 & -6.3 \\
    0.05 & 5.6 & 6.1  &5.6 & 6.2 & 5.8 & 6.6 & -0.1 & -6.1 \\
    0.10 & 6.9 & 5.4  &6.9 & 6.6 & 6.6 & 7.1 & -0.1 & -6.1 \\
    0.50 & 7.3 & 7.4  &7.3 & 7.3 & 7.4 & 8.0 & -0.2 & -6.1 \\
    1.00 & 6.1 & 7.3  &6.1 & 6.6 & 4.9 & 7.5 & 0.4  & -6.1 \\
    1.50 & 4.6 & 3.8  &4.6 & 2.9 & 6.0 & 5.6 & 1.5  & -5.9 \\
  \hline
   3.00  &-3.5 & -4.4 &-1.8&-2.6 &-3.2 &-0.3 &-0.7  &-6.1 \\
  \hline
  \hline
\end{tabular}
}
\end{table}

 \begin{figure}[t!]
  \centering
  \includegraphics[width=2.5in]{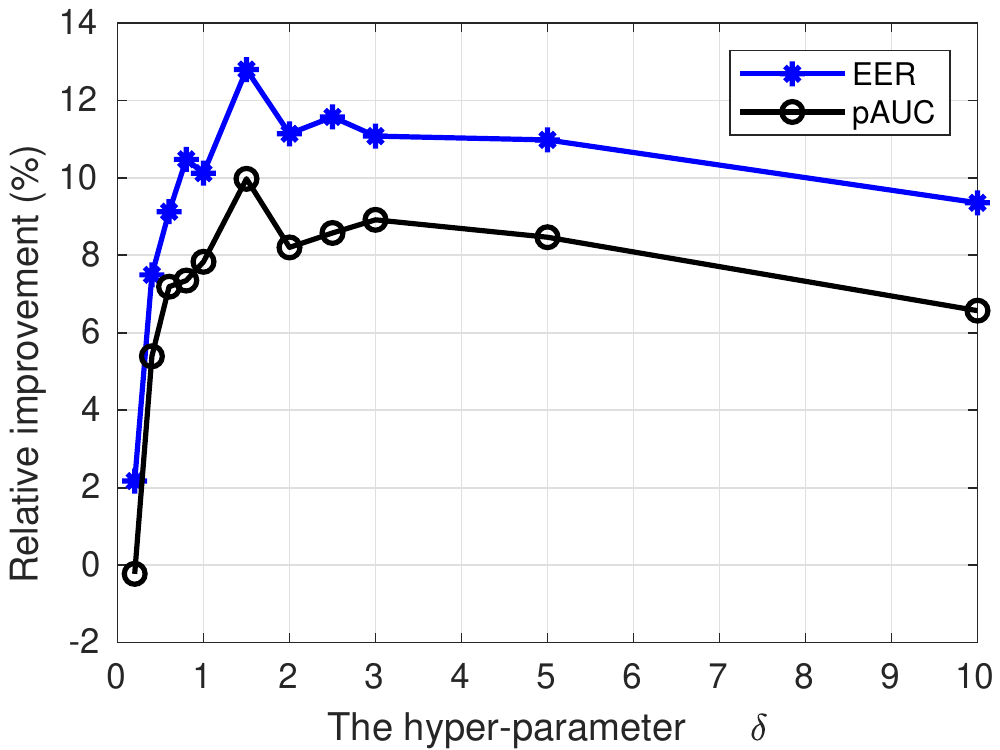}
  \caption{Relative performance improvement of pAUCMetric over PLDA with respect to hyperparameter $\delta$.}
  \label{fig:EER_delta}
\end{figure}

 We set the dimensions of the input features of the comparison back-ends from 50 to 400 with a step size of 50, where the features are produced from LDA. The experimental results are summarized in Table \ref{tab:dimension_analysis}.
From this table, one can see that pAUCMetric obtains lower EER scores and smaller performance variances than the comparison back-ends in all cases. It reaches the lowest EER when the input feature dimension is set to 150.

\subsubsection{Effects of the hyperparameters of pAUCMetric}\label{sec:parameter}
pAUCMetric has five hyperparameters $\alpha$, $\beta$, $\delta$, $\gamma$, and $\mu$.
The reason why we set $\alpha=0$ and $\beta=0.01$ is that the number of the imposter trials is much larger than the number of the true trials, hence restricting the working area $[\alpha, \beta]$ to a FPR range of close to zero makes the algorithm focus on discriminating the difficult trials.

We study the effects of $\delta$, $\gamma$, and $\mu$ by grid search. We first search $\delta$ in $[0,10]$ with the other hyperparameters set to their default values. Figure~\ref{fig:EER_delta} shows the relative performance improvement of pAUCMetric over PLDA. From the figure, we find that pAUCMetric is robust in a wide range of $\delta$ with the best $\delta$ being around 1.5. We search $\gamma$ and $\mu$ in grid jointly as listed in Tables \ref{tab:gamma_mu_EER} and \ref{tab:gamma_mu_pAUC} with the other hyperparameters set to their default values. It is observed that the stable working region is $\mu\in[0, 10^{-3}]\cap \gamma\in[0,1.5]$. Interestingly, pAUCMetric still works well even without regularization, i.e., $\mu = 0$ and $\gamma=0$. The above observation is consistent across all training scenarios of this paper. To summarize, pAUCMetric is insensitive to the 3 hyperparameters.

\subsubsection{Complexity analysis}\label{sec:Complexity}

\begin{figure}[t!]
  \centering
  \includegraphics[width=2.4in]{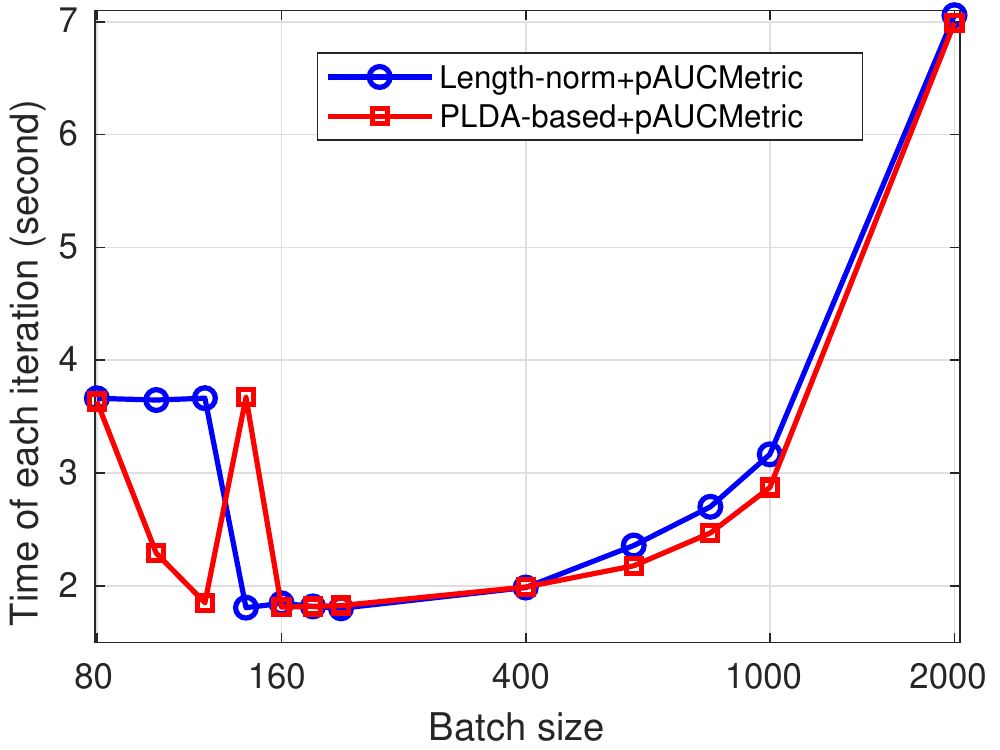}
  \caption{Training time of pAUCMetric at each iteration with different batch sizes.}
  \label{fig:Time_variable_batch_size}
\end{figure}

\label{sec:complexity}
\begin{figure}[t!]
  \centering
  \includegraphics[width=2.6in]{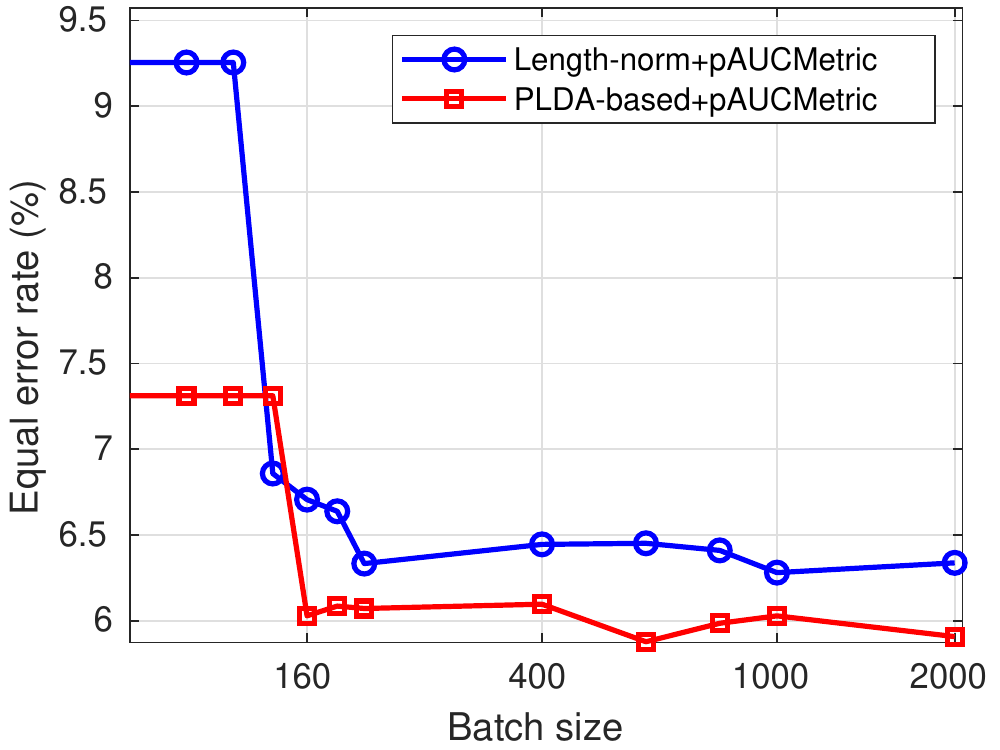}
  \caption{EER of pAUCMetric with different batch sizes.}
  \label{fig:EER_variable_batch_size}
\end{figure}

In Section \ref{sec:complexity}, we have  proven that the computational complexity is cubic with respect to the batch size $s$. This section further discusses the effect of $s$ on the computational complexity and performance of pAUCMetric.

\begin{figure}[t!]
  \centering
  \includegraphics[width=2.6in]{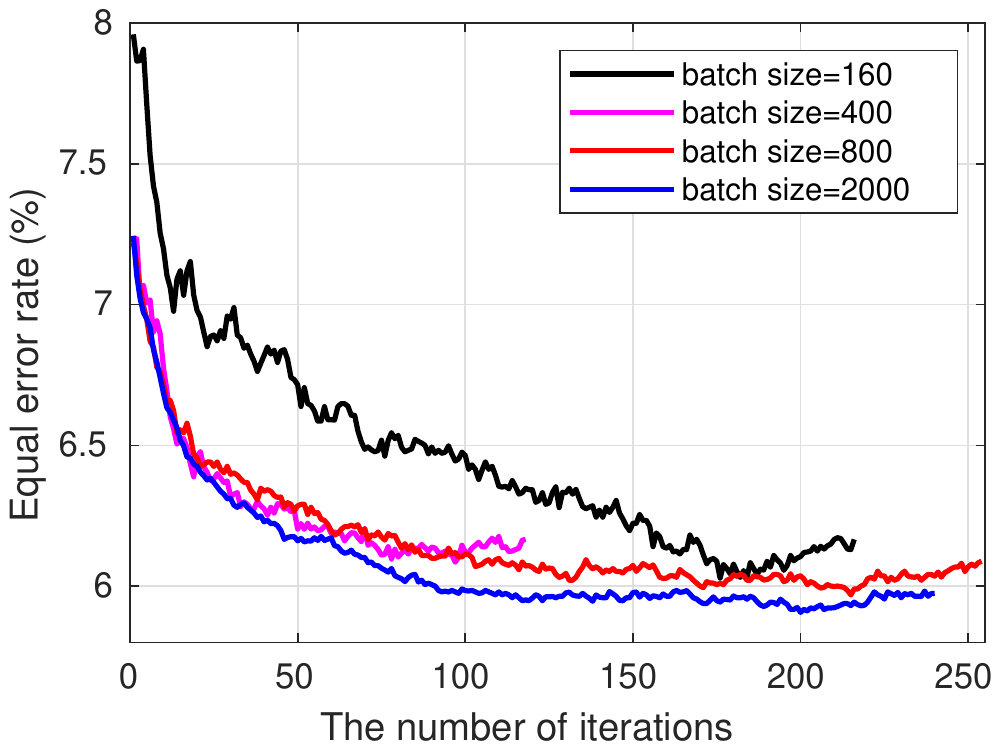}
  \caption{Convergence analysis of pAUCMetric with different batch sizes.}
  \label{fig:PLDA_pAUCMetric_EER_iterations}
\end{figure}

Figure \ref{fig:Time_variable_batch_size} shows the training time of \textit{the pAUCMetric at each iteration} with respect to the batch size $s$. One can see that the training time increases sharply with the value of $s$, which is consistent with the theoretical analysis in Section \ref{sec:complexity}. Note that, when the value of $s$ is less than 160, the fluctuation of the training time is caused by some random factors.

Figure \ref{fig:EER_variable_batch_size} shows the EER results of pAUCMetric with different values of $s$. One can see that, on the one hand, the value of $s$ cannot be too small, e.g. smaller than 160, and on the other hand, increasing the batch size does not always improve the performance. In practice, we only need a small suitable batch size, such as our default $s=500$.

Figure \ref{fig:PLDA_pAUCMetric_EER_iterations} plots the convergence rate with respect to $s$. We see that, when $s$ is larger than a reasonable small value, the convergence rate of pAUCMetric does not improve anymore. In other words, although the computational complexity of pAUCMetric is theoretically cubic with respect to $s$, setting $s$ to a small reasonable value not only guarantees good performance but also is efficient.

Finally, we evaluated the proposed pAUCMetric in other test scenarios beyond the scenarios in this subsection and with the length-normalization preprocessing technique as well. The experimental conclusions are consistent with those in this subsection. But we will not report the tedious results to make the paper concise.

\section{Conclusions and future work} \label{sec:summary}
In this paper, we presented a speaker verification back-end based on the squared Mahalanobis distance, i.e., pAUCMetric, to maximize pAUC. Because directly optimizing pAUC is an NP-hard problem, we first relaxed the optimization problem to a polynomial-time solvable one, and then adopted a random sampling strategy to reduce the computational complexity. The pAUC optimization was proven to be a problem of enlarging the weighted margin between the positive and negative trials, where the information of pAUC is encoded in the weights of the trials. In order to boost the performance of pAUCMetric, we further proposed to use the length-normalization and the PLDA-based preprocessing techniques. Experimental results on the  NIST 2016 SRE and SITW data demonstrated the effectiveness of pAUCMetric and showed that pAUCMetric is insensitive to the hyperparameter settings in all the studied evaluation scenarios.

The proposed method can be further improved in many aspects. Work is in progress to study automatic hyperparameter tuning algorithms via auto machine learning and investigate new methods that do not need feature preprocessing. Since pAUCMetric does not need a decision threshold, it is interesting to explore whether the pAUC optimization can be integrated with score calibration, which will be carried out in the near future. A speaker verification system consists of both front-end and  back-end. So, only developing a good back-end may not give the best performance. Consequently, it is legitimate to extend pAUCMetric to end-to-end training, which is also on our roadmap. Furthermore, as suggested by one anonymous reviewer, it is interesting to separate the effects of back-ends and loss functions, and evaluate how well the hinge loss can approximate the indicator function in pAUCMetric.

\section*{acknowledgements}

The authors are grateful to Dr. Kong Aik Lee, the Associate Editor, and the anonymous reviewers for their valuable comments, which helped greatly improve the quality of the paper.

\begin{appendices}
      \section{  }
A probabilistic explanation of the Mahalanobis distance is given as follows.
 Let $\mathbf{z}= \mathbf{x}_1-\mathbf{x}_2$ be the difference between two embedding vectors. We further assume that $p(\mathbf{z}|tar)=N(0,\mathbf{\Sigma}_0)$ and $p(\mathbf{z}|non)=N(0,\mathbf{\Sigma}_1)$, where ``$tar$'' and ``$non$'' denote target and non-target respectively.  The LLR test is:
 \begin{equation}\label{eq:LLR}
   LLR (\mathbf{z})= \log (p(\mathbf{z}|tar))-\log(p(\mathbf{z}|non)),
 \end{equation}
which can be transformed to:
 \begin{equation}\label{eq:LLR2}
   2LLR(\mathbf{z}) = -\mathbf{z}^T(\mathbf{\Sigma}_0^{-1}-\mathbf{\Sigma}_1^{-1})\mathbf{z} + \log|\mathbf{\Sigma}_1|-\log|\mathbf{\Sigma}_0|.
 \end{equation}
Neglecting the constant terms of \eqref{eq:LLR2} gives:
 \begin{equation}
    \widetilde{LLR}(\mathbf{z}) = -\mathbf{z}^T\mathbf{M}\mathbf{z}
 \end{equation}
where $\mathbf{M}=\mathbf{\Sigma}_0^{-1}-\mathbf{\Sigma}_1^{-1}$ is the parameters of the Mahalanobis distance \cite{koestinger2012large}.

    \section{  }
 \begin{myTheo}
   The relative constraints of the triplet-loss are a subset of the relative constraints of the AUC-loss.
 \end{myTheo}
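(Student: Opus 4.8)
The plan is to exhibit each triplet-loss constraint explicitly as one of the AUC-loss constraints, so that the claimed containment follows by matching definitions rather than by any analytic estimate. First I would recall that the AUC-loss is the pAUC-loss \eqref{eq:loss} specialized to $\alpha=0$ and $\beta=1$, so that $\mathcal{N}_0=\mathcal{N}$ and every ordered pair consisting of one positive trial $\mathbf{z}_j^+\in\mathcal{P}$ and one negative trial $\mathbf{z}_r^-\in\mathcal{N}$ contributes a term $\max(0,\delta-S(\mathbf{z}_r^-;\mathbf{M})+S(\mathbf{z}_j^+;\mathbf{M}))$, i.e., a relative constraint of the form $S(\mathbf{z}_r^-;\mathbf{M})-S(\mathbf{z}_j^+;\mathbf{M})>\delta$. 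Writing $\mathbf{z}_j^+=\mathbf{x}_a-\mathbf{x}_b$ with $y_a=y_b$ and $\mathbf{z}_r^-=\mathbf{x}_c-\mathbf{x}_d$ with $y_c\neq y_d$, each such AUC constraint is indexed by a quadruple of utterances $\{\mathbf{x}_a,\mathbf{x}_b,\mathbf{x}_c,\mathbf{x}_d\}$ drawn from the full product $\mathcal{P}\times\mathcal{N}$; this is the tetrad relative constraint.

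Second, I would take an arbitrary triplet $\{\mathbf{x}^a,\mathbf{x}^p,\mathbf{x}^n\}$ underlying a triplet-loss constraint \eqref{eq:trip} and observe that the pair $(\mathbf{x}^a,\mathbf{x}^p)$ is a legitimate positive trial (same speaker) while the pair $(\mathbf{x}^a,\mathbf{x}^n)$ is a legitimate negative trial (different speakers). Setting $\mathbf{z}^+=\mathbf{x}^a-\mathbf{x}^p$ and $\mathbf{z}^-=\mathbf{x}^a-\mathbf{x}^n$ and using the identity $S(\mathbf{x}_1,\mathbf{x}_2;\mathbf{M})=S(\mathbf{x}_1-\mathbf{x}_2;\mathbf{M})$ from \eqref{eq:Mahalanobis}, the triplet constraint \eqref{eq:trip} rewrites verbatim as $S(\mathbf{z}^-;\mathbf{M})-S(\mathbf{z}^+;\mathbf{M})>\delta$, which is exactly an AUC constraint with positive difference $\mathbf{z}^+$ and negative difference $\mathbf{z}^-$.

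The only point requiring care is to verify that this particular AUC constraint genuinely lies in the AUC construction. Its positive trial $(\mathbf{x}^a,\mathbf{x}^p)$ and negative trial $(\mathbf{x}^a,\mathbf{x}^n)$ share the anchor $\mathbf{x}^a$, so the associated tetrad is degenerate: as a set of speaker features it collapses to the three distinct utterances $\{\mathbf{x}^a,\mathbf{x}^p,\mathbf{x}^n\}$, which is precisely the triplet relative constraint. Because the AUC-loss enumerates the entire product $\mathcal{P}\times\mathcal{N}$ with no restriction forbidding a positive and a negative trial from sharing an utterance, these anchor-sharing pairs are included in its index set. Hence every triplet relative constraint, together with its inequality, coincides with an AUC relative constraint, establishing the containment. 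I expect the main (and essentially the only) obstacle to be making this degeneracy step clean, namely formally checking that the map sending $\{\mathbf{x}^a,\mathbf{x}^p,\mathbf{x}^n\}$ to the trial pair $((\mathbf{x}^a,\mathbf{x}^p),(\mathbf{x}^a,\mathbf{x}^n))$ lands inside $\mathcal{P}\times\mathcal{N}$; once this is confirmed the inequality match is immediate. The inclusion is moreover proper, since the AUC-loss additionally contains constraints whose positive and negative trials involve four distinct utterances, and such constraints can never arise from a triplet.
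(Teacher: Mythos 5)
Your proposal is correct and takes essentially the same route as the paper's proof: both rest on the single observation that a triplet constraint $\{\mathbf{x}^a,\mathbf{x}^p,\mathbf{x}^n\}$ is exactly a degenerate tetrad (AUC) constraint whose positive trial $(\mathbf{x}^a,\mathbf{x}^p)$ and negative trial $(\mathbf{x}^a,\mathbf{x}^n)$ share the anchor utterance. The paper packages this as a four-way case decomposition of the tetrad set with the two anchor-sharing cases equal to $Tri$, while you exhibit the injection of each triplet into $\mathcal{P}\times\mathcal{N}$ directly (and note properness), but the content is the same.
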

 \begin{proof}
Let $\mathbf{x}_i^m$ be the $i$th embedding vector of the $m$th speaker.
The relative constraints of the triplet-loss $Tri$ is:
\begin{equation}
  Tri=\{(\mathbf{x}_i^m, \mathbf{x}_j^m; \mathbf{x}_k^n)|i \not= j,m \not= n\}
\end{equation}
The relative constraints of the AUC-loss $Tet$ can be divided into the following four sets:
\begin{equation}
  Tet_1=\{(\mathbf{x}_i^m, \mathbf{x}_j^m;\mathbf{x}_i^m,\mathbf{x}_k^n)|i \not= j,m \not= n\}
\end{equation}
\begin{equation}
  Tet_2=\{(\mathbf{x}_i^m, \mathbf{x}_j^m;\mathbf{x}_j^m,\mathbf{x}_k^n)|i \not= j,m \not= n\}
\end{equation}
\begin{equation}
  Tet_3=\{(\mathbf{x}_i^m, \mathbf{x}_j^m;\mathbf{x}_l^m,\mathbf{x}_k^n)|i \not= j \not= l,m \not= n\}
\end{equation}
\begin{equation}
  Tet_4=\{(\mathbf{x}_i^m, \mathbf{x}_j^m;\mathbf{x}_l^t,\mathbf{x}_k^n)|i \not= j,m \not= t \not= n\}
\end{equation}
with $Tet= Tet_1 \cup Tet_2 \cup Tet_3 \cup Tet_4 $.
Obviously, $Tet_1 \cup Tet_2 = Tri$, which derives $Tri \subset Tet$.
\end{proof}

\section{ }

The DET curves of the studied methods on the $\rm E2 \sim  E4$ schemes are plotted in Figs.~\ref{fig:11} to \ref{fig:13}.
\begin{figure}[h!]
  \centering
  \includegraphics[width=3.45in]{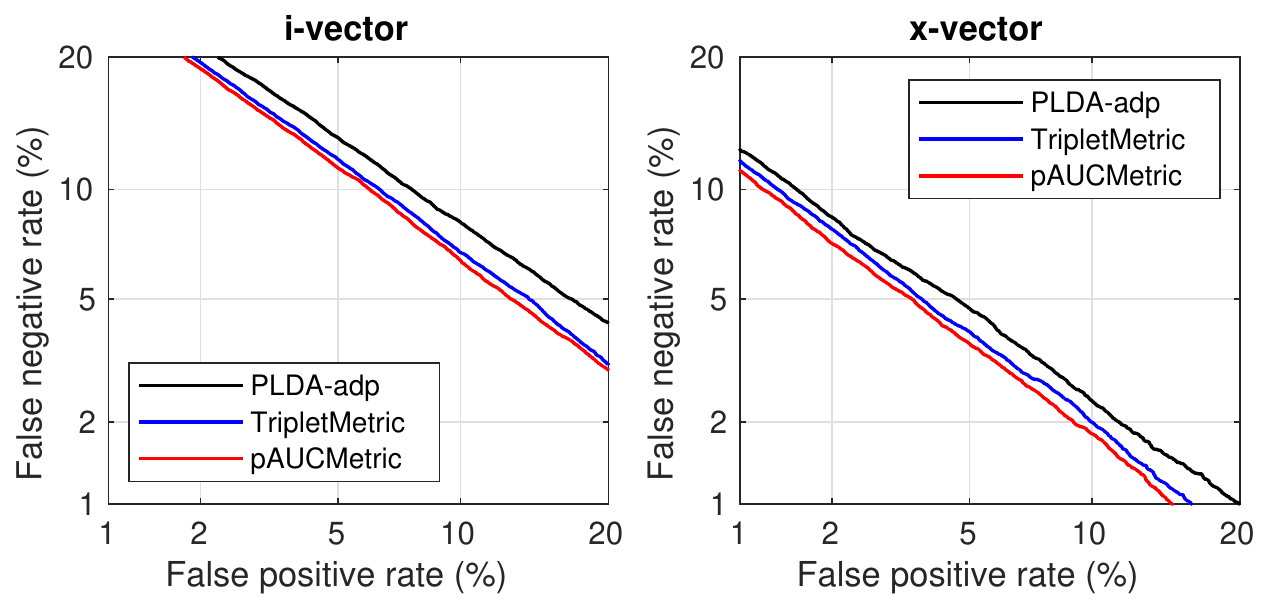}
  \caption{ DET curves of the studied  methods on the Cantonese data of the E2 scheme.}
  \label{fig:11}
\end{figure}

\begin{figure}[h!]
  \centering
  \includegraphics[width=3.45in]{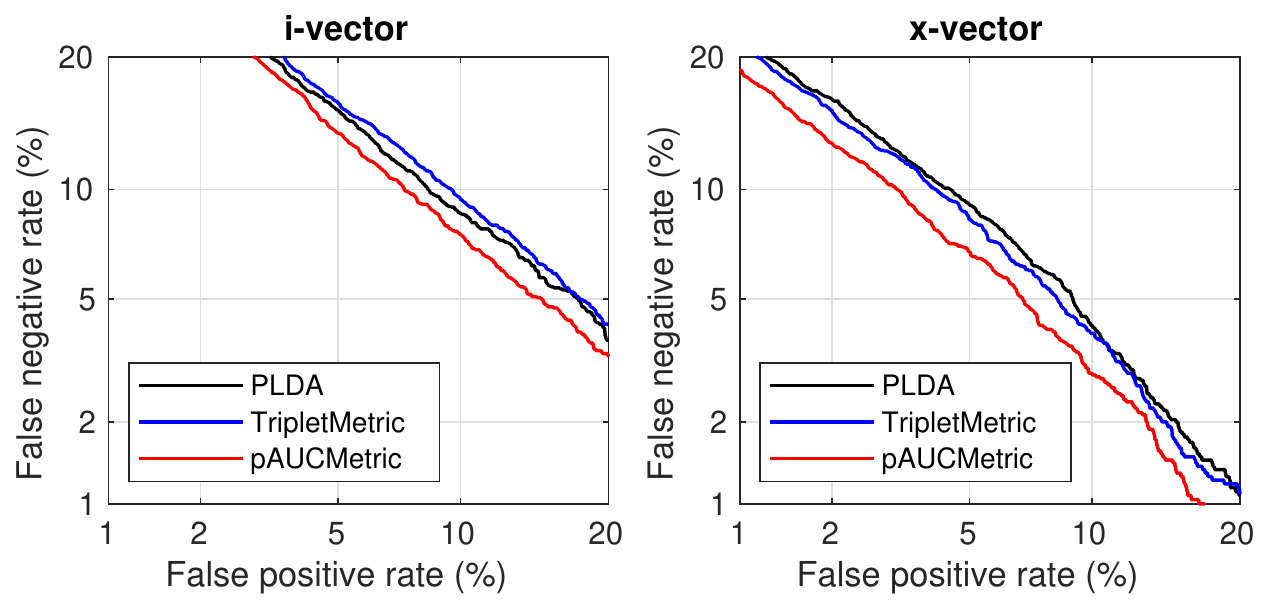}
  \caption{ DET curves of the studied methods on the Dev.Core task of the E3 scheme.}
\end{figure}

\begin{figure}[h!]
  \centering
  \includegraphics[width=3.45in]{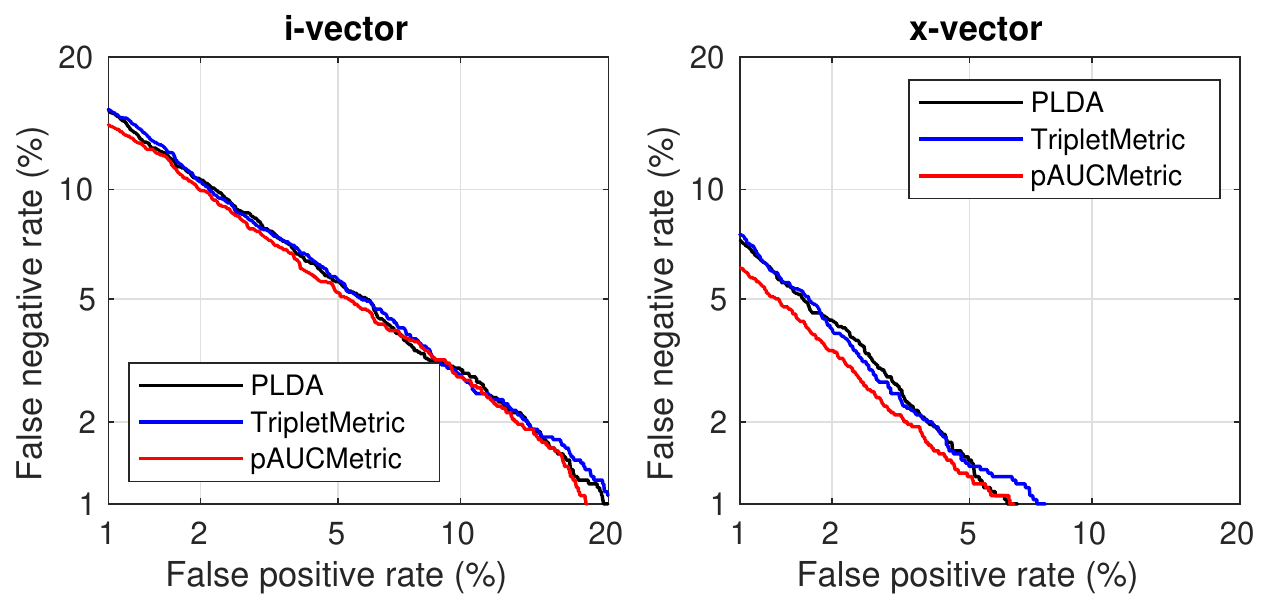}
  \caption{ DET curves of the studied  methods on the Dev.Core task of the E4 scheme.}
  \label{fig:13}
\end{figure}

\end{appendices}

\bibliographystyle{IEEEtran}
\bibliography{mybib}

\end{document}